%% file: paper.tex
\documentclass[11pt,a4paper]{article}

\usepackage[T1]{fontenc}
\usepackage[utf8]{inputenc}
\usepackage[margin=1in]{geometry}
\usepackage[dvipsnames]{xcolor}
\usepackage{cite}
\usepackage{array}
\usepackage{booktabs}
\usepackage{multirow}
\usepackage{graphicx}
\usepackage{xfrac}

\usepackage{amsfonts}
\usepackage{amssymb}
\usepackage{amsmath}
\usepackage{amsthm}
\usepackage{mathtools}
\usepackage{subcaption}
\usepackage{algorithm}
\usepackage{algorithmic}
\usepackage{enumerate}

\usepackage{tikz}
\usetikzlibrary{positioning, calc, fit, math, trees}
\usetikzlibrary{decorations.pathreplacing,calligraphy}
\usepackage{todonotes}
\usepackage[norefs,nocites,nomsgs]{refcheck}

\usepackage[hidelinks]{hyperref}
\usepackage[capitalize,noabbrev]{cleveref}

\input{macros}

\newtheorem{theorem}{Theorem}[section]
\newtheorem{lemma}[theorem]{Lemma}

\newtheorem{observation}[theorem]{Observation}
\theoremstyle{definition}
\newtheorem{definition}[theorem]{Definition}

\theoremstyle{remark}

\title{On Time-Memory Tradeoffs for Maximal Palindromes\\ with Wildcards and $k$-Mismatches}

\author{
  Amihood Amir\thanks{Department of Computer Science, Bar-Ilan University, Israel, and Georgia Tech, College of Computing, GA, USA. Email: \texttt{amihoodamir@gmail.com}. Partially supported by ISF grant 168/23 and BSF grant 2018-141.}
  \and
  Ayelet Butman\thanks{Department of Computer Science, Holon Institute of Technology, Israel. Email: \texttt{ayeletbutman@gmail.com}.}
  \and
  Michael Itzhaki\thanks{Department of Computer Science, Bar-Ilan University, Israel. Email: \texttt{michaelitzhaki@gmail.com}. Partially supported by ISF grant 168/23.}
  \and
  Dina Sokol\thanks{Department of Computer and Information Science, Brooklyn College, NY, USA. Email: \texttt{dinasokol@gmail.com}.}
}
\date{}
\begin{document}
\maketitle
\vspace{-2.5em}

\input{abstract}

\input{main}

\bibliographystyle{plainurl}
\bibliography{paper}

\end{document}

%% file: macros.tex
\newcommand{\bigO}{\mathcal{O}}

\let\originalleft\left
\let\originalright\right
\renewcommand{\left}{\mathopen{}\mathclose\bgroup\originalleft}
\renewcommand{\right}{\aftergroup\egroup\originalright}

\newcommand{\Oh}[1]{\bigO\left(#1\right)}
\newcommand{\Oht}[1]{\tilde{\bigO}\left(#1\right)}

\newcommand{\dd}{\mathinner{.\,.}}
\newcommand{\emptystring}{\varepsilon}

\newcommand{\LCE}{\operatorname{LCE}}

\newcommand{\cconv}{\operatorname{cconv}}
\newcommand{\kconv}{\operatorname{kconv}}
\newcommand{\mconv}{\operatorname{mconv}}
\newcommand{\matches}{\approx}

\newcommand{\Ham}{\operatorname{H}}

\newcommand{\rev}[1]{\overleftarrow{#1}}

\newcommand{\Pal}{\mathcal{P}}

\DeclareMathOperator*{\argmax}{argmax}

%% file: abstract.tex
\begin{abstract}
\noindent This paper addresses the problem of identifying palindromic factors in texts that include wildcards -- special characters that match all others. These symbols challenge many classical algorithms, as numerous combinatorial properties are not satisfied in their presence.
We apply existing wildcard-LCE techniques to obtain a continuous time-memory tradeoff, and present the first non-trivial linear-space algorithm for computing all maximal palindromes with wildcards, improving the best known time-memory product in certain parameter ranges. Our main results are algorithms to find and approximate all maximal palindromes in a given text. We also generalize both methods to the $k$-mismatches setting, with or without wildcards.
\end{abstract}

%% file: main.tex
\section{Introduction}\label{s:intro}

\input{introduction/introduction}

\section{Preliminaries}\label{s:pre}

\input{preliminaries/preliminaries}

\section{Our Results in Detail}\label{sec:detailed-results}

\input{introduction/our_results_detailed}

\section{Maximal Palindrome Approximation and Computation}

\input{results/convolution}

\section{Palindromes with Mismatches and Wildcards}

\input{results/with_mismatches}

\section{Open Problems}

\input{future_work}

%% file: introduction/introduction.tex
Palindrome recognition is a fundamental problem in computer science. It appears in complexity courses as an example of a problem that is solved in linear time by a two-tape Turing machine~\cite{S:73} but requires quadratic time in a single-tape machine~\cite{M:84}. Apostolico et al.~\cite{ABG:92} considered parallel algorithms for the problem. Manacher~\cite{man:75} and Galil~\cite{G:75} showed how to use pushdown automata for recognizing palindrome prefixes of a string in the online setting. In the online setting, Amir and Porat~\cite{AP:14} showed how to recognize  {\em approximate} palindrome prefixes of a string. For more references, see~\cite{KMP:77,GS:76,LWY:19}. Many other palindrome-related problems have been further explored. For example, finding the palindromic length~\cite{BKRS:17}, indexing all palindromes~\cite{RS:18}, and sublinear time detection  algorithms~\cite{CPR:22}.

Beyond their theoretical interest, palindromic structures also appear in nature. Particular examples from biology include sequences recognized and cut by restriction enzymes, and the hairpin structures of DNA base pairs. For elaboration, see, e.g.~\cite{gk:97,f:03,lssnz:05,sr:12}.

{\em Wildcard Matching} is a variant of traditional string matching where certain input characters---referred to as {\em wildcards}---match all other characters. This problem has been extensively researched since the inception of modern pattern matching~\cite{FP:74,Abr:87} and continues to be an active field of research~\cite{I97,K:02,AI:24,BCS:24a,BCS:24b,cgl-04}.

A related problem is {\em mismatch matching} -- deciding whether a certain text matches a pattern up to at most $k$ mismatches. A common mismatch-counting scheme is the Hamming distance, which counts the number of mismatching characters between two strings of equal length. Another common and extensively researched scheme is edit-distance, which accounts for mismatches, as well as insertions and deletions. The mismatch matching problem has also drawn significant attention from theoretical computer scientists. See, e.g.~\cite{GG-86,AALLL:00,kps:06,fgku:15}.

In this work, we address the problem of {\em palindrome recognition with wildcards and mismatches}, which is related to all of the aforementioned areas of study. Under these settings, the challenge of finding palindromes in a given string becomes significantly harder, as most combinatorial properties used for existing efficient algorithms fail.

\subsection{Related Work}

\input{introduction/related_work}

\subsection{Our contributions.}

\input{introduction/our_contribution}

%% file: introduction/related_work.tex
\paragraph*{Wildcard Matching.} In wildcard matching, both the text $T$ and the pattern $P$ may contain wildcards. The first efficient algorithm for wildcard pattern matching was introduced in 1974 by Fischer and Paterson~\cite{FP:74}. Their algorithm employs a Fast Fourier Transform-based approach, and its time complexity is $\Oh{n\log m \log|\Sigma|}$, where $n=|T|,\,m=|P|$ and $\Sigma$ is the alphabet of $T$ and $P$. This result was later improved to a randomized $\Oh{n\log m}$ algorithm~\cite{I97,K:02}, followed by a deterministic $\Oh{n \log m}$ algorithm~\cite{CH:02,cc:07}. Fischer and Paterson's original method also computes the Hamming distance between two strings for every alignment of the pattern over the text in $\Oh{n|\Sigma|\log m}$ time. Building on these developments, Abrahamson~\cite{Abr:87} proposed a more efficient $\Oh{n\sqrt{m}\log m}$ algorithm for the same problem and introduced the concept of degenerate string matching, where each text symbol can represent a subset of the alphabet.

Beyond the baseline problem, additional tools were developed. Notable extensions include the dictionary index with wildcards and mismatches by Cole et al.~\cite{cgl-04} (often called ``don't care suffix tree'') and the {\em Longest Common Extension (LCE)} with wildcards time/memory tradeoff by Bathie et al.~\cite{BCS:24b}.

\paragraph*{Mismatch matching.}
A central notion in approximate string matching is \emph{pattern matching with mismatches}, where one computes (or reports) the mismatch distance (Hamming distance) between a pattern and a text. Its thresholded variant (\emph{$k$-mismatch}) outputs only alignments with at most $k$ mismatches. A standard toolbox reduces mismatch counting to convolutions/FFTs, yielding near-linear-time algorithms for many settings. For the bounded $k$-mismatch problem, early algorithms~\cite{LV-85} use LCE (``kangaroo jumps'') to give an overall $\Oh{nk}$ baseline ($n$ is the input length); a breakthrough by Amir et al.~\cite{alp:04} improved this to about $\Oh{n\sqrt{k\log k}}$, followed by a recent Monte Carlo $\Oh{n\sqrt k}$ algorithm by Chan et al.~\cite{CJWX:23}. Although an LCE structure exists for wildcard matching, there is little hope of finding a combinatorial LCE structure for the $k$-mismatch problem~\cite{BFGK:24}.

The $k$-mismatch problem and the wildcard matching problem are closely related and share a common algorithmic toolbox.

\paragraph*{Palindromes.} Despite being extensively researched, most existing works about palindromes only address classical settings, while generalizations (e.g., wildcards and mismatches) remain mostly undeveloped. Most results regarding palindrome processing are obtained through careful combinatorial analysis (e.g.~\cite{man:75,kk:09,BKRS:17,ItzhakiFCT:25,ItzhakiSOFSEM:26}), and these combinatorial properties do not readily generalize to our settings.

Nonetheless, existing works also investigate palindromes in the presence of wildcards and mismatches. Sokol~\cite{SOKOL:20} showed an algorithm for 2D palindromes with $k$ mismatches, using the Hamming distance measure. Alzamal et al.~\cite{MCCZSDS:23} considered palindromes in degenerate strings with gaps and mismatches. Amir et al.~\cite{AI:24} discussed the general problem of finding maximal palindromes in degenerate strings and proved lower bounds for general alphabets.

%% file: introduction/our_contribution.tex
We study the computation of all maximal palindromes in strings with wildcards, where each wildcard matches any alphabet symbol. As a baseline, we adapt known LCE-based techniques for palindromic computation to the wildcard setting, obtaining a new continuous time-memory tradeoff. We begin in the setting of no mismatches and introduce two linear-space algorithms:
\begin{enumerate}[i]
    \item \textbf{precise algorithm} that detects all maximal subpalindromes, and
    \item \textbf{approximation algorithm} that approximates all maximal subpalindromes.
\end{enumerate}

In regimes with high wildcard density (i.e., $G=\omega(\sqrt {n\log n})$), the first algorithm achieves a superior time-memory product compared to the baseline methods. The approximation algorithm achieves $\Oht{(1+1/\epsilon)n}$ time.

We then show that both algorithms generalize to the $k$-mismatch setting. There, the first algorithm is particularly effective as it can outperform the baseline algorithms in both time and space. However, the approximation algorithm no longer guarantees $\Oht{(1+1/\epsilon)n}$ running time; its complexity depends on the counting-convolution cost $f_S(n)$ (see~\cref{tab:results}).

A detailed statement of our guarantees, which relies on technical definitions, is presented in \cref{sec:detailed-results} after the preliminaries.

%% file: preliminaries/preliminaries.tex
\paragraph*{Basic definitions.} A {\em string} is an ordered sequence of characters. We denote the length of $S$ as $|S|$, its $i$-th character by $S[i]$ and its factor $S[i]S[i+1]\dots S[j]$ by $S[i\dd j]$. The empty string of length $0$ is denoted by $\emptystring$. When $j<i$, we denote $S[i\dd j]$ as the empty string $\emptystring$. A factor $S[1\dd i]$ is called a {\em prefix}, and a factor $S[j\dd |S|]$ is called a {\em suffix}. If $1\le i<|S|$ (resp. $|S|\ge j>1$) the prefix (resp. suffix) is said to be {\em proper}. The {\em concatenation} of strings $S$ and $T$ is $S\cdot T\coloneq S[1]S[2]\dots S[|S|]T[1]T[2]\dots T[|T|]$. When clear from context, we write $ST$. For a positive integer $i$, we denote $S^i$ as the concatenation of $S$ to itself $i$ times, i.e., $S^1 = S$, and $S^i = S^{i-1} \cdot S$. The {\em reverse} of $S$ is $\rev{S}=S[n]S[n-1]\dots S[1]$. A string satisfying $S=\rev{S}$ is called a {\em palindrome}.

To refer to a specific occurrence of a factor in a string, we use the interval notation $[i\dd j]=\{t\mid t\in \mathbb N,\; i \le t \le j\}$. An interval $[i\dd t]$ is called a {\em prefix} of $[i\dd j]$ if $t\le j$, and a suffix is defined symmetrically.
The center of an interval $\mathcal P=[i\dd j]$ is denoted as $c_\Pal \coloneq \frac{i+j}{2}$. The interval notation is used when a specific factor is concerned, rather than a string\footnote{Certain properties are only meaningful for substrings, such as starting and ending position. When palindromes are concerned, maximality is only well defined for substrings.}.

For two strings of equal length $S,\,T$, we denote by $\Ham(S,\,T)$ the {\em Hamming distance} between $S$ and $T$, i.e., the number of mismatching characters. Since comparing a string $S$ to its reverse $\rev{S}$ counts each mismatched position twice (once in the left half and once in the right half), we define the {\em palindromic mismatch count} of $S$ as $\frac{1}{2}\Ham(S,\,\rev{S})$. A string is called a {\em $k$-palindrome} if its palindromic mismatch count is at most $k$. For example, $S=\texttt{abcdcbx}$ has palindromic mismatch count $\frac{1}{2}\Ham(\texttt{abcdcbx},\,\texttt{xbcdcba})=1$, and indeed, replacing one character in $S$ suffices to make it a palindrome.

\paragraph*{Palindromes.}
A string $S$ is called a \textit{palindrome} or \textit{palindromic} if $S=\rev{S}$. For example, the strings $\texttt{abcba}$ and $\texttt{abccba}$ are palindromes, while $\texttt{abcbaa}$ is not. An interval $\Pal=[i\dd j]$ is called a {\em subpalindrome} of $S$ if its corresponding factor $S[i\dd j]$ is palindromic. We say that $\Pal$ is {\em centered} at $c_{\Pal}$, and that its {\em radius} is $\lceil \frac{j-i}{2}\rceil$. If the interval $[i\dd j]$ is a subpalindrome but the interval $[i-1\dd j+1]$ is either not defined or not a subpalindrome, we say that $[i\dd j]$ is a {\em maximal subpalindrome}. There are $2|S|-1$ center positions in $S$, each of which corresponds to a unique maximal subpalindrome. We denote the maximal subpalindrome at center $c$ as $\Pal_c$ and its radius as $r_c$.

To detect the radii of all maximal subpalindromes (from now on: all maximal palindromes), one can use Manacher's algorithm~\cite{man:75}, which runs in linear time and is considered to be efficient in practice. However, it is not suitable for mismatches and wildcards. An alternative is to detect palindromes using LCE queries, which can also be applied in wildcard/mismatch settings.

\begin{definition}[LCE Query]
    Let $S,\,T$ be two strings. The {\em Longest Common Extension (LCE)} query $\LCE(S,\,i,\,T,\,j)$ computes the largest integer $k \ge 0$ such that $S[i\dd i+k-1]=T[j\dd j+k-1]$.
\end{definition}

A linear-space LCE data structure can answer LCE queries in constant time, as summarized in the following lemma:
\begin{lemma}[LCE data structure~\cite{KLAAP:01}]\label{lem:lcp-ds}
Let $S,\, T$ be two strings over a linearly-sortable alphabet. A data structure that answers arbitrary LCE queries in constant time can be constructed in $\Oh{|S|+|T|}$ time, and the structure occupies $\Oh{|S|+|T|}$ space.
\end{lemma}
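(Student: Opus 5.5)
The statement is the classical LCE structure of Kasai et al., and the plan is to build it from a suffix array with an LCP array and a range-minimum structure. First we reduce to a single string. Form $U = S\,\$_1\,T\,\$_2$, where $\$_1,\$_2$ are fresh distinct symbols not in the alphabet. The alphabet stays linearly sortable and $|U| = |S|+|T|+2$. An LCE query $\LCE(S,i,T,j)$ then equals the longest common prefix of the suffixes of $U$ starting at positions $i$ and $|S|+1+j$. The separators cannot match anything, so the common extension never runs past the end of $S$ or $T$. Queries with both arguments in $S$, or both in $T$, are handled the same way.

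Next we build the suffix array $\mathrm{SA}$ of $U$, together with its inverse $\mathrm{ISA}$. Over a linearly-sortable alphabet this takes $\Oh{|U|}$ time with a linear-time construction such as Kärkkäinen--Sanders or Ko--Aluru. We then compute the LCP array, where $\mathrm{LCP}[r]$ is the length of the longest common prefix of the suffixes at ranks $r-1$ and $r$. This uses the Kasai et al.\ algorithm: scan the text positions $p=1,2,\dots$ in order and compare the suffix at $p$ with its predecessor in $\mathrm{SA}$. The key invariant is $\mathrm{LCP}[\mathrm{ISA}[p+1]] \ge \mathrm{LCP}[\mathrm{ISA}[p]]-1$, because deleting the first character of two suffixes that share $h$ characters leaves two suffixes that are still ordered and share $h-1$. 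So the current match length drops by at most one per step. A potential argument then bounds the total number of character comparisons by $\Oh{|U|}$.

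For a query on positions $p \neq q$, let $a=\mathrm{ISA}[p]$ and $b=\mathrm{ISA}[q]$, and assume $a<b$. The answer is $\min\{\mathrm{LCP}[a+1],\dots,\mathrm{LCP}[b]\}$. This holds because in lexicographic order the common prefix of two suffixes equals the minimum of the adjacent common prefixes between them. If $p=q$, the answer is the remaining length. We therefore preprocess $\mathrm{LCP}$ for range-minimum queries, using the Bender--Farach-Colton structure or a Cartesian tree with LCA. That structure takes linear time and space and answers in $\Oh{1}$.

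The main obstacle is the linear-time guarantee of the two nontrivial components, the suffix array construction and the $\Oh{1}$ RMQ. I would cite these as black boxes rather than reprove them. The part I would argue explicitly is the amortization in the LCP computation and the correctness of the min-of-adjacent-LCPs identity. The identity follows from a short argument on lexicographic order: any prefix shared by the two extreme suffixes is shared by every suffix ranked between them, and conversely the minimum is attained by some adjacent pair.
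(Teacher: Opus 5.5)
Your proposal is correct and follows the standard construction behind the cited result; the paper gives no proof of its own, only the citation to Kasai et al. That construction is: concatenate with unique separators, build a linear-time suffix array, compute the LCP array by Kasai et al.'s amortized scan, and answer queries by constant-time range-minimum queries over the LCP array. One small fix: for $p=q$ the answer is the length up to the next separator (e.g.\ $|S|-p+1$ for a position $p$ in $S$), not the remaining length of $U$.
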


All palindromes of a string can be identified using LCE queries: $\LCE(S,\,i+1,\,\rev{S},\,n-i+2)$ equals $r_{i}$ and $\LCE(S,\,i+1,\, \rev{S},\,n-i+1)$ equals $r_{i+0.5}$.

A simple yet efficient technique, often referred to as ``kangaroo jumps'', utilizes LCE queries to skip over matching text segments in constant time. By repeatedly querying the LCE to identify the next mismatch, one can compute LCE with $k$ mismatches in $\Oh{k}$ time, and consequently, can find all maximal palindromes with $\le k$ mismatches in $\Oh{nk}$ time.

\paragraph*{Wildcard- and Mismatch-Matching.}

In this paper, we refer to strings containing wildcards as {\em wildcard strings}, and otherwise as solid strings. As this paper mainly concerns wildcard strings, an unspecified string is assumed to be a wildcard string. We denote a wildcard---a character that matches any character---as $\phi$. Given two strings $S,\, T$ of length $n$, we say that $S$ {\em matches} $T$ if $S[i]\matches T[i]$ for every $1\le i \le n$, and denote $S\matches T$. For example, $\texttt{ab}\phi \matches \phi \texttt{ba}$.

The problem of {\em pattern matching with wildcards}, i.e., finding all occurrences of a wildcard pattern of length $m$ within a wildcard text of length $n$, can be solved in $\Oh{n\log m}$ time by using fast polynomial multiplication algorithms--commonly the {\em Fast Fourier Transform (FFT)}. This method is also referred to as {\em convolution}. The convolution method also matches all prefixes of the pattern with suffixes of the text and vice versa, a property vital for palindrome detection. A small visualization of the convolution method is shown in~\cref{fig:conv}.

Abrahamson~\cite{Abr:87} showed how to compute the Hamming distance between every alignment of the pattern over the text, including all prefix/suffix alignments.

In this paper, we use both tools and refer to them as the {\em matching-} and {\em counting-convolution}. The formal definition follows:
\begin{definition}[Convolution Array]\label{def:conv-arr}
    Let \( T,\, P \) be two strings of respective lengths $n\ge m$. The {\em counting convolution} \( \operatorname{cconv}(T,\, P) \) is an array $C[1\dd n+m-1]$, where:
    \[
    C[i] \coloneq
    \begin{cases}
    \Ham(T[1\dd i],\,P[m-i+1\dd m]) & \text{for } 1 \leq i < m \\
    \Ham(T[i-m+1\dd i],\,P) & \text{for } m \le i \le n\\
    \Ham(T[i-m+1 \dd n],\,P[1\dd n-i+m]) & \text{for } n < i \le n+m-1
    \end{cases}
    \]

    Similarly, the {\em matching convolution} $\mconv(T,\,P)$ is a binary array $M[1\dd n+m-1]$ where:
    \[
    M[i]\coloneq \begin{cases}
        1 & \text{if } \cconv(T,\,P)[i] =0 \\
        0 & \text{otherwise}
    \end{cases}
    \]
\end{definition}

The array $\mconv(T,\,P)$ can be computed in $\Oh{n\log m}$ time and the arrays $\cconv(T,\,P)$ can be computed in $\Oh{n\min(|\Sigma|\log m,\sqrt{m\log m})}$ time. We emphasize that $\mconv$ is not computed from $\cconv$; we define it in terms of $\cconv$ for clarity. An example of the counting and matching convolution arrays is in~\cref{fig:conv}.
\input{figures/conv}

Next, we define wildcard fragments:
\begin{definition}[Wildcard Fragments]
    We say that the factor $\Pal=[i\dd j]$ is a {\em wildcard fragment} of string $S$ if it satisfies all of the following:
    \begin{enumerate}[i]
        \item For every $i\le k\le j$, the character $S[k]$ is a wildcard.
        \item $i-1\le 0$ or $S[i-1]$ is not a wildcard.
        \item $j+1>|S|$ or $S[j+1]$ is not a wildcard.
    \end{enumerate}
\end{definition}

A parameter often used by algorithms for wildcard strings is the number of different wildcard fragments (or simply fragments) in the input, which we call $G$. For example, the string $S=\phi\texttt{aab}\phi\phi\phi\texttt{a}\phi\texttt{a}$ has three fragments: $S[1\dd1],\, S[5\dd 7],\, S[9\dd 9]$. The number of fragments ranges from zero (a solid string) to $\Theta(n)$ (e.g., the string $(a\phi)^n$).

The baseline algorithm for this paper uses LCE with wildcards (LCEW):
\begin{lemma}[LCEW data structure~\cite{BCS:24b}]
    Let $S$ be a string of length $n$ with $G$ wildcard fragments. For any parameter $T \in [1, G]$, one can build an LCEW data structure in $\Oh{\frac{nG}{T}\log n}$ time using $\Oh{\frac{nG}{T}}$ space. This structure answers LCEW queries in $\Oh{T}$ time.
\end{lemma}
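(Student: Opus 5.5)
The plan is to reduce a wildcard-LCE query to a short sequence of ordinary LCE jumps, followed by one lookup in a precomputed table. Let $d=j-i$ be the diagonal of a query $\LCEW(i,j)$, so that we compare $S[i+t]$ with $S[j+t]$ for $t=0,1,\dots$. First I build the structure of \cref{lem:lcp-ds} on a solid copy $S'$ of $S$ in which every wildcard is replaced by a fresh unique symbol. In $S'$ an LCE query stops exactly at a genuine mismatch or at a position where one side holds a wildcard. In the latter case the wildcards are skipped by jumping to the nearer of the two fragment ends, which takes $\Oh{1}$ time given precomputed fragment boundaries. Every such jump consumes a fragment on at least one of the two sides. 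Unrestricted, this kangaroo walk costs $\Oh{G}$ per query, so the tradeoff parameter $T$ will cap it.

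Second, I sample checkpoints. Call every $T$-th fragment of $S$ (in left-to-right order) a sampled fragment, and cut $S$ at the sampled fragments into $\Oh{G/T}$ blocks, each containing at most $T$ fragments. For every diagonal $d$ and every sampled fragment $f$, I store $\mathrm{val}_L(f,d)$, the LCEW starting at the start of $f$ on the left side with the partner at offset $d$. Symmetrically I store $\mathrm{val}_R(f,d)$ with $f$ on the right side. This gives $\Oh{nG/T}$ words. A query runs the walk above. Each jump passes a fragment on the $i$-side or on the $j$-side, so after at most $2T$ jumps one of the two sides has entered a sampled fragment. At that point the query returns the offset travelled so far plus the stored value. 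Hence the query time is $\Oh{T}$.

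Third, I fill the tables within $\Oh{\frac{nG}{T}\log n}$ time. Computing each entry by walking would cost $\Oh{T}$, giving $\Oh{nG}$ in total, which is too slow. Instead I use convolution. For each block $B$ between consecutive sampled fragments, one $\mconv(S,B)$, computed in $\Oh{n\log n}$ time, tells me for every shift $d$ whether $B$ matches completely against the opposite side. Processing the sampled fragments from right to left along each diagonal then gives the recurrence $\mathrm{val}(f,d)=|B|+\mathrm{val}(f',d)$ whenever the block matches, where $f'$ is the next sampled fragment. If the block does not match, the first mismatch lies inside $B$, and the value is obtained by a walk confined to $B$. The total time over $\Oh{G/T}$ blocks is $\Oh{\frac{nG}{T}\log n}$.

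The main obstacle is that confined walk, and the analogous issue for $\mathrm{val}_R$. Inside $B$ the left side contains at most $T$ fragments, but the opposite side may contain arbitrarily many, so the walk is not obviously $\Oh{T}$. Even with an $\Oh{T}$ walk, the non-matching case would still need to be charged against the $\Oh{n\log n}$ convolution cost. My first attempt would be a binary search over prefixes of $B$ using a logarithmic number of extra convolutions per block (prefixes of geometrically growing length). This locates the first mismatch for all shifts simultaneously, at an $\Oh{\log n}$ overhead that is already absorbed in the stated bound. If that fails, I would refine the blocks so that the cuts come from sampled fragments of both sides, restoring the symmetric $2T$ argument used in the query.
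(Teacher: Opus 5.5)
The paper gives no proof of this lemma. It imports it from Bathie et al., so your argument has to stand on its own. The solid-copy walk, the sampling of every $T$-th fragment, the $\Oh{T}$ query bound and the recurrence for completely matching blocks are all sound. One detail needs fixing: a jump across a wildcard fragment on one side can carry the other side past the start of a sampled fragment, so cap each jump at the next sampled start on the other side.

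The genuine gap is the one you flag yourself. Nothing in your plan computes the \emph{exact} values $\mathrm{val}_L(f,d)$ and $\mathrm{val}_R(f,d)$ within the time budget when the block does not match.
\begin{itemize}
\item Convolving the prefixes $B[1\dd 2^k]$ is not a binary search. It only brackets the first mismatch in $(2^{k-1},2^k]$ for each shift. A real binary search would test a different prefix length for each shift in every round, whereas one convolution tests one fixed pattern against all shifts. Also, $\Theta(\log n)$ extra convolutions per block add a logarithmic factor to the total; they are not absorbed.
\item Cutting at sampled fragments of both sides does bound each confined walk by $\Oh{T}$ jumps. However, the resulting segments depend on $d$, so the block convolutions no longer tell you which segments match. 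More importantly, $\Oh{T}$ per non-matching pair is too much. Take $((\texttt{a}\phi)^{T}\texttt{b})^{n/(2T+1)}$, where $G=\Theta(n)$. There, a constant fraction of the pairs $(f,d)$ need $\Theta(T)$ jumps before meeting a mismatch or another sampled fragment. Preprocessing is then $\Theta(nG)$, the very bound the lemma is meant to beat.
\end{itemize}

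The missing idea is to stop storing exact LCEW values and let the query finish the search. From the $\Oh{G/T}$ convolutions, keep only the bits saying whether $B_f$ matches the substring starting at partner position $q$. Turn them into skip pointers: $\mathrm{next}(f,q)$ is the first sampled fragment $f^*$, at or after $f$ along that diagonal, whose block does not match completely. It satisfies $\mathrm{next}(f,q)=f$ if $B_f$ fails at $q$, and $\mathrm{next}(f,q)=\mathrm{next}(f',q+|B_f|)$ otherwise. So all $\Oh{nG/T}$ pointers cost $\Oh{1}$ each, and since matching is symmetric, the same table serves both sides. A query then proceeds as follows.
\begin{enumerate}
\item Walk $\Oh{T}$ jumps until one side, say the $i$-side, enters a sampled fragment $f$.
\item Skip to the start of $B_{f^*}$. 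The first mismatch now lies in $B_{f^*}$ on the $i$-side, hence at most $T$ fragments ahead on that side.
\item Walk $\Oh{T}$ further jumps until the $j$-side enters a sampled fragment $g$, or the mismatch appears.
\item Skip on the $j$-side to the start of $B_{g^*}$, where $g^*=\mathrm{next}(g,\cdot)$.
\item The first mismatch now lies in a region with at most $T$ fragments on each side, so a final $\Oh{T}$-jump walk finds it.
\end{enumerate}
This gives $\Oh{nG/T}$ space, $\Oh{\frac{nG}{T}\log n}$ preprocessing and $\Oh{T}$ query time.
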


By selecting the tradeoff parameter to be $T\coloneq \sqrt{G}$, we obtain an $\Oht{n\sqrt{G}}$-time and space algorithm to detect all maximal palindromes in $S$.

%% file: figures/conv.tex
\begin{figure}[ht]
\centering
\begin{tikzpicture}[>=latex, node distance=0, every node/.style={font=\footnotesize}]
  \tikzset{
    box/.style={draw, minimum width=1.2cm, minimum height=0.9cm},
    gbox/.style={draw, dashed, color=gray, minimum width=1.2cm, minimum height=0.9cm},
    win/.style={rounded corners=2pt, dashed, draw, thick, inner sep=0pt},
  }
  \def\cell{0.6}
  \def\gap{0.9}
  \def\stepw{(3*\cell + \gap)}

\newcommand{\sliding}[6]{
    \begin{scope}[xshift=#1]

      \foreach \i/\name in {0/a,1/b,2/$\phi$}{
        \draw[box] (\i*\cell,0) rectangle ++(\cell,\cell) node[pos=.5]{\name};
      }

      \pgfmathsetmacro\leftx{(#2)*\cell}
      \draw[win] (\leftx,-0.08) rectangle ++(2*\cell,\cell+0.16);

      \draw[->, #5] (\leftx+0.5*\cell,\cell+0.25) -- ++(0,-0.45);
      \node at (\leftx+0.5*\cell,\cell+0.38) {$b$};

      \draw[->, #6] (\leftx+1.5*\cell,\cell+0.25) -- ++(0,-0.45);
      \node at (\leftx+1.5*\cell,\cell+0.38) {$a$};

      \pgfmathsetmacro\idx{int((#2)+2)}
      \node at (1*\cell,-0.5) {$C[\idx]$=#3};
      \node at (1*\cell,-1) {$M[\idx]$=#4};
    \end{scope}
}

\sliding{-2.5cm}{-1}{0}{1}{black}{green!60!black}
\sliding{0cm}{0}{2}{0}{red}{red}
\sliding{2.5cm}{1}{0}{1}{green!60!black}{green!60!black}
\sliding{5cm}{2}{0}{1}{green!60!black}{black}

  \node[anchor=south, draw, fill=white, rounded corners=1pt, inner sep=2pt]
    at ([xshift=2em,yshift=-1em]current bounding box.south east) {
      \begin{tikzpicture}[baseline=0]
        \draw[line width=0.5pt, ->] (0,0) -- +(0.8,0);
        \node[anchor=west, xshift=2pt, font=\scriptsize] at (0.8,0) {not compared};

        \draw[red, line width=0.8pt, ->] (0,-0.28) -- +(0.8,0);
        \node[anchor=west, xshift=2pt, font=\scriptsize] at (0.8,-0.28) {mismatch};

        \draw[green!60!black, line width=0.8pt, ->] (0,-0.56) -- +(0.8,0);
        \node[anchor=west, xshift=2pt, font=\scriptsize] at (0.8,-0.56) {match};
      \end{tikzpicture}
    };
\end{tikzpicture}

\caption{The counting and matching convolution between $T=\texttt{ab}\phi$ and $P=\texttt{ba}$. We denote $C=\cconv(T,\,P)$ and $M=\mconv(T,\,P)$.}
\label{fig:conv}
\end{figure}

%% file: introduction/our_results_detailed.tex
\begin{table}[ht]
\centering
\caption{Asymptotic space and time bounds for computing all maximal $k$-mismatch palindromes with wildcards. 
Parameters: $n$ (input length), $k$ (maximum mismatches allowed), $G$ (fragments in the input), and $\Sigma$ (the alphabet). We denote by $f_S(m)\coloneq \Oh{m\min(|\Sigma|\log m, \sqrt{m\log m})}$ the time required to compute counting convolution. Our results are highlighted in bold and with a star.
}
\label{tab:results}
\renewcommand{\arraystretch}{1.2}
\begin{tabular}{@{}llcc@{}}
\toprule
\textbf{Algorithm} & \textbf{Regime} & \textbf{Space} & \textbf{Time} \\ \midrule
\multirow{3}{*}{LCE-based}
& $k=0,\,G=\Omega(\log n)$ & $\Oh{n\sqrt{G/\log n}}$ & $\Oh{n\sqrt{G\log n}}$ \\
& $k>0,\; G =\Omega(k\log n)$ & $\Oh{n\sqrt{Gk/\log n}}$ & $\Oh{n\sqrt{Gk\log n}}$ \\
& otherwise & $\Oh{n}$ & $\Oh{n(k+G)}$ \\
\midrule
\multirow{2}{*}{\textbf{Precise${}^\star$}}
& $k=0$ & $\Oh{n}$ & $\Oh{n\sqrt{n\log n}}$ \\
& otherwise & $\Oh{n}$ & $\Oh{n\sqrt{f_S(n)}}$ \\
\midrule
\multirow{3}{*}{\textbf{\((1+\epsilon)\)-Approx.${}^\star$}}
& $k=0$ & $\Oh{n}$ & $\Oh{(1/\epsilon)\,n\log^2 n}$ \\
& $k>0,\; G=0$ & $\Oh{n}$ & $\Oht{(1/\epsilon)nk^{2/3}}$ \\
& otherwise & $\Oh{n}$ & $\Oh{(1/\epsilon)\log n \cdot f_S(n)}$ \\
\bottomrule
\end{tabular}
\end{table}

The guaranteed running time of the algorithms in this paper is presented in~\cref{tab:results}.

\begin{itemize}
\item \textbf{For $k=0$ (Precise, no mismatches):} The LCE algorithm is faster in raw time whenever $G=o(n)$. However, its space can grow to $\Theta\!\left({n\sqrt{n/\log n}}\right)$. The precise algorithm provides the first non-trivial linear-space solution, and its $\Oh{n^{2.5}\sqrt{\log n}}$ time/memory product is superior to the LCE method for $G = \omega(\sqrt{n\log n})$.

\item \textbf{For $k>0$:} The precise algorithm gives the best time guarantee: $\Oh{n^{7/4}\log^{1/4} n}$.

\item \textbf{Approximation:} From a runtime perspective, the approximate algorithm is the fastest in all regimes, unless the parameters $k$ and $G$ are very small (e.g., $k+G=o(\log^2 n)$).
\end{itemize}

\paragraph*{Remark on the approximation algorithm.}
In the $(1+\epsilon)$-approximation algorithm with mismatches, the running time improves by a factor of $\Oh{\log n}$ when $|\Sigma|=\Omega(n^\alpha)$ for some constant $\alpha>0$. The reason is that the algorithm computes counting convolutions on strings whose lengths grow geometrically, so its total convolution cost is of the form $\sum_i 2^i f_S(n/2^i)$. Under this alphabet-size assumption, this sum is geometric and therefore evaluates to $\Oh{f_S(n)}$.

%% file: results/convolution.tex
The main contribution of this paper is the development of algorithms to find and approximate all maximal palindromes in a wildcard string.

We begin by sketching the LCE baseline algorithm for wildcards. The idea is similar to kangaroo jumping: we treat wildcards as regular characters, which causes the LCE query to stop early when comparing wildcards to other characters. Therefore, if there are $g$ wildcards, this method requires $\Oh{g}$ time per LCE query. A better way is to jump to the end of the wildcard fragment when such a mismatch is detected: wildcard fragments are guaranteed to match anything. This improvement reduces the LCE query time from $\Oh{g}$ to $\Oh{G}$. 

\subsection{Setup}
Throughout the section, we denote by $S$ the input wildcard string of length $n$. Recall that $\Pal_c$ is the maximal subpalindrome at center $c$, and that $r_c$ is the radius of $\Pal_c$.

In this section, we compute the convolution of $S$ with its reverse, i.e., $\operatorname{cconv}(S,\;\rev{S})$ or $\mconv(S,\;\rev{S})$. We call these {\em self-convolutions}, and denote them by $\cconv(S)$ and $\mconv(S)$, respectively. We observe that the self-convolution $\mconv(S)$ identifies all prefix and suffix palindromes in $S$. Later in the algorithms, we use the self-matching-convolution array as if it reported palindromes.
\begin{observation}[Palindromes detection using convolutions]\label{ob:pref-suf}
    Let $S$ be a wildcard string of length $n$, and let $M=\mconv(S)$ be the self-matching-convolution of $S$. The string $S$ has a palindromic prefix $S[1\dd i]$ if and only if $M[i]=1$, and the string $S$ has a palindromic suffix $S[i\dd n]$ if and only if $M[n+i-1]=1$.
\end{observation}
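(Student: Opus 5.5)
We unfold \cref{def:conv-arr} with $T\coloneq S$ and $P\coloneq\rev{S}$, so $m=n$. In this case the middle range $m\le i\le n$ is the single index $i=n$, and we treat it as the boundary case of the first or third range. The plan is to write $M[i]=1$ as $\cconv(S,\rev{S})[i]=0$, and to reduce this to a statement of the form $X\matches\rev{X}$ for a suitable factor $X$ of $S$. Here palindromic is read in the wildcard sense, meaning the factor matches its reverse.

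\textbf{Prefix case.} For $1\le i\le n$, the first (or middle) case of the definition gives $C[i]=\Ham(S[1\dd i],\,\rev{S}[n-i+1\dd n])$. We check directly that $\rev{S}[n-i+1\dd n]=S[i]S[i-1]\dots S[1]=\rev{S[1\dd i]}$, since $\rev{S}[t]=S[n-t+1]$. Hence $C[i]=0$ exactly when $S[1\dd i]\matches\rev{S[1\dd i]}$, i.e., when the prefix $S[1\dd i]$ is palindromic. Because $M[i]$ is defined as the indicator of $C[i]=0$, this gives the first equivalence.

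\textbf{Suffix case.} Put $j\coloneq n+i-1$, so that $n\le j\le 2n-1$. For $j>n$ the third case applies and gives $C[j]=\Ham(S[j-n+1\dd n],\,\rev{S}[1\dd 2n-j])=\Ham(S[i\dd n],\,\rev{S}[1\dd n-i+1])$. Again $\rev{S}[1\dd n-i+1]=S[n]S[n-1]\dots S[i]=\rev{S[i\dd n]}$. Therefore $M[n+i-1]=1$ if and only if $S[i\dd n]\matches\rev{S[i\dd n]}$. The boundary index $i=1$, i.e., $j=n$, falls in the middle case and compares the whole of $S$ with $\rev{S}$, which agrees with the formula.

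\textbf{Main obstacle.} The substance is only index bookkeeping: reversing a prefix or suffix of $\rev{S}$ correctly, and checking that the three ranges in \cref{def:conv-arr} meet at $i=n$ without an off-by-one error. One point to state explicitly is that $\Ham$ is computed with wildcard matching, so a mismatch is a pair of characters that do not match under $\matches$. With that convention, $\Ham=0$ is equivalent to $\matches$, and the argument is complete.
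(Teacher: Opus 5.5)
Your proof is correct and takes the same route as the paper. Both identify the alignment at index $i$ (resp.\ $n+i-1$) as the comparison of $S[1\dd i]$ with $\rev{S}[n-i+1\dd n]=\rev{S[1\dd i]}$ (resp.\ of $S[i\dd n]$ with $\rev{S}[1\dd n-i+1]=\rev{S[i\dd n]}$), so $M=1$ exactly when the factor matches its reverse. You additionally carry out explicitly the suffix case that the paper calls symmetric, and you check the boundary index $n$ where the three ranges of the convolution definition meet.
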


\begin{proof}
A prefix $S[1\dd i]$ is a palindrome if it matches its reverse, $\rev{S[1\dd i]}$, which is equivalent to the suffix $\rev{S}[n-i+1 \dd n]$. The matching convolution index $M[i]$ is defined to compute the match between this specific prefix-suffix alignment. The suffix case $M[n+i-1]$ is symmetrical.
\end{proof}
Similarly, the self-counting-convolution $\cconv(S)$ counts for every prefix and suffix of $S$ the number of mismatches with its reverse.
More generally, when the self-matching-convolution is computed on a factor $S[\ell\dd r]$, it identifies all centers $\ell \le c \le r$ for which $S[\ell\dd r]$ has a palindromic prefix or suffix at center $c$; the detected radius at $c$ is $\lceil \min\{ c-\ell,\;r-c \} \rceil$.

\textbf{Simplifying assumption.} Since centers of even-length palindromes are half-integers, they complicate indexing notations. Therefore, we ignore even-length palindromes in the input and hence define the center of the interval $\Pal=[i\dd j]$ to be $c_\Pal\coloneq \lfloor\frac{i+j}{2}\rfloor $. The assumption is valid because the transformation $S'=S[1]\$S[2]\$\dots\$S[n]$ makes any subpalindrome in $S$ into an odd-length subpalindrome in $S'$.

We proceed to detail two naive subroutines used by our algorithms:
\begin{itemize}
    \item The procedure $\operatorname{NaivePalFind}(S,\,c,\,u)$ attempts to compute $r_c$ by performing character-by-character comparison, but no more than $u$ comparisons. The procedure is guaranteed to return $\min(u,\,r_c)$ and run in time $\Oh{u}$.
    \item The procedure $\operatorname{NaivePalExtend}(S,\,c,\,r_0)$ is given a center $c$ and a guaranteed radius $r_0$. It naively extends this palindrome by comparing the characters at index $c \pm (r_0+i)$ for $i=1,2,\dots$, stopping at the first mismatch.
\end{itemize}
The procedures are illustrated using a toy example in~\cref{fig:naive}.

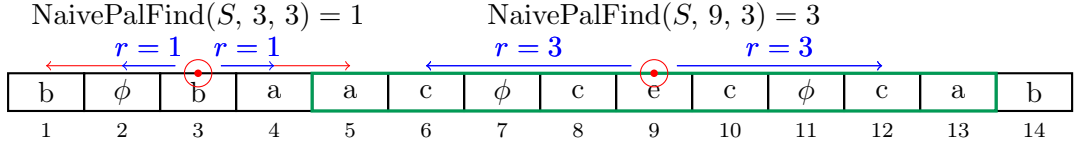
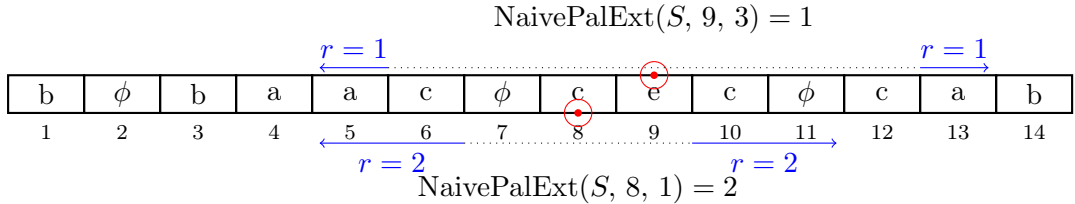
\begin{figure}
    \centering
\begin{subfigure}{\textwidth}
    \input{figures/naive_methods}
    \caption{Illustration of $\operatorname{NaivePalFind}$. The blue arrow indicates successful comparisons performed by the algorithm, and the red arrow indicates a mismatch. The maximal palindrome centered at $c=9$ is highlighted in green.}
\end{subfigure}

\vspace{0.5em}
\hrule
\vspace{0.5em}

\begin{subfigure}{\textwidth}
    \input{figures/naive_ext}
    \caption{Illustration of $\operatorname{NaivePalExt}$. The dashed line indicates the highest known lower bound on the radius of the palindrome at the given center (the third parameter). The blue arrow indicates successful comparisons performed by the algorithm.}
\end{subfigure}

    \caption{Demonstration of the naive subroutines $\operatorname{NaivePalFind}$ and $\operatorname{NaivePalExtend}$. The red dots are the centers provided as input to the respective subroutines.}
    \label{fig:naive}
\end{figure}

\subsection{Precise Algorithm}

\input{results/convolution_precise}

\subsubsection{Example: Precise Algorithm}\label{ssec:toy-run}
Consider the string $S=\texttt{b}\phi\texttt{baac}\phi\texttt{cec}\phi\texttt{cc}$ of length $n=13$, with block size $u=3$. Phase~1 of~\cref{alg:precise} iterates over prefixes $S[1\dd iu]$ for $i=1,2,\dots$, computing the self-matching-convolution of each prefix to detect palindromic suffixes. \cref{fig:precise} depicts five such iterations. A center marked in green is one whose palindrome extends to (or past) the current prefix boundary, so that the convolution detects it as a palindromic suffix. Once a green center turns red in a later iteration, its palindrome no longer reaches the boundary of the longer prefix -- meaning Phase~1 has already captured its radius to within an additive error of $u$. At that point, the center is handed off to Phase~2, where $\operatorname{NaivePalExtend}$ closes the remaining gap of at most $u$ characters.

For instance, center $c=8$ first appears as green in iteration $i=3$ (prefix $S[1\dd 9]$) and turns red in iteration $i=4$ (prefix $S[1\dd 12]$), indicating that the palindrome at $c=8$ does not extend to position $12$. The lower bound $\tilde{R}[8]$ recorded during iteration $i=3$ is then refined by naive extension.

\input{figures/example_precise}

\subsection{Approximation Algorithm}

\input{results/convolution_approx}

\subsubsection{Example: Approximation Algorithm}\label{ssec:example-approx}
Consider the string $S=\texttt{b}\phi\texttt{baac}\phi\texttt{cec}\phi\texttt{ccab}\phi$ of length $n=16$. \cref{fig:approx} traces the behavior of~\cref{alg:approx} for center $c=5$ across five iterations. The red dots mark the anchor positions, and the blue brace indicates the window $[a-u,\;a+u]$ around the anchor closest to $c=5$. The gray region highlights the substring of the window that determines the detected radius at $c=5$: its extent is $\min(c-\ell,\;r-c)$, where $[\ell, r]$ is the window boundary.

Observe the two mechanisms at work. First, the window size $u$ grows geometrically (from $u=2$ to $u=8$ across the five panels), so the detected radius at $c=5$ increases with each iteration. Second, the anchor set is thinned: the first two panels have eight anchors, the next two have four, and the last panel has two. This thinning occurs whenever $u$ doubles, keeping the per-iteration cost proportional to $n\log n$. Despite the coarser anchor spacing, the growing window size ensures that each center remains within distance $u/2$ of some anchor, preserving the $(1+\epsilon)$ approximation guarantee.

\input{figures/example_approx}

%% file: figures/naive_methods.tex
\newcommand{\naivepal}[3]{
  \node at ({#1 + 0.5}, 1.25) {$\operatorname{NaivePalFind}(S,\,#1,\,#2)=#3$};

  \node[circle, draw=red] at ({#1 + 0.5}, 0.5) {};
  \node[circle, fill=red, inner sep=1pt] at ({#1 + 0.5}, 0.5) {};
    
    \ifthenelse{#2 = #3} {
\draw[blue, ->] ({#1 + 0.2}, 0.6) -- node[midway, above] {$r=#3$} ({#1 - #3 + 0.5}, 0.6);
\draw[blue, ->] ({#1 + 0.8}, 0.6) -- node[midway, above] {$r=#3$} ({#1 + #3 + 0.5}, 0.6);
    }{
\draw[blue, tips=never] ({#1 + 0.2}, 0.6) -- node[midway, above] {$r=#3$} ({#1 - #3 + 0.5}, 0.6);
\draw[blue, tips=never] ({#1 + 0.8}, 0.6) -- node[midway, above] {$r=#3$} ({#1 + #3 + 0.5}, 0.6);
\draw[red, ->] ({#1 - #3 + 0.5}, 0.6) -- ({#1 - #3 - 0.5}, 0.6);
\draw[red, ->] ({#1 + #3 + 0.5}, 0.6) -- ({#1 + #3 + 1.5}, 0.6);
    }

    \draw[blue, ->] ({#1 + 0.2}, 0.6) -- node[midway, above] {$r=#3$} ({#1 - #3 + 0.5}, 0.6);
    \draw[blue, ->] ({#1 + 0.8}, 0.6) -- node[midway, above] {$r=#3$} ({#1 + #3 + 0.5}, 0.6);
}

\begin{tikzpicture}
  \foreach \v [count=\i] in {b, $\phi$, b, a, a, c, $\phi$, c, e, c, $\phi$, c, a, b} {
    \draw[thick] (\i,0) rectangle (\i+1,0.5);
    \node at (\i+0.5,0.25) {\v};
    \node at (\i+0.5,-0.25) {\scriptsize \i};
  }
    \draw[ForestGreen, very thick] (5,0) rectangle (14,0.5);

    \naivepal{9}{3}{3}
    \naivepal{3}{3}{1}

\end{tikzpicture}

%% file: figures/naive_ext.tex
\newcommand{\naivextex}[7]{
  \node at ({#1 + 0.5}, {#7}) {$\operatorname{NaivePalExt}(S,\,#1,\,#2)=#3$};

  \node[circle, draw=red] at ({#1 + 0.5}, #4) {};
  \node[circle, fill=red, inner sep=1pt] at ({#1 + 0.5}, #4) {};

\draw[dotted] ({#1 - #2}, #6) -- ({#1 + #2 + 1}, #6);
\draw[blue, ->] ({#1 - #2}, #6) -- node[midway, #5] {$r=#3$} ({#1 - #2 - #3 + 0.1}, #6);
\draw[blue, ->] ({#1 + #2 + 1}, #6) -- node[midway, #5] {$r=#3$} ({#1 + #2 + #3 + 0.9}, #6);
}

\newcommand{\naivext}[3]{\naivextex{#1}{#2}{#3}{0.5}{above}{0.6}{1.25}}
\newcommand{\naivextd}[3]{\naivextex{#1}{#2}{#3}{0}{below}{-0.4}{-1}}

\begin{tikzpicture}
  \foreach \v [count=\i] in {b, $\phi$, b, a, a, c, $\phi$, c, e, c, $\phi$, c, a, b} {
    \draw[thick] (\i,0) rectangle (\i+1,0.5);
    \node at (\i+0.5,0.25) {\v};
    \node at (\i+0.5,-0.25) {\scriptsize \i};
  }

    \naivext{9}{3}{1}
    \naivextd{8}{1}{2}

\end{tikzpicture}

%% file: results/convolution_precise.tex
To detect all maximal palindromes, our algorithm is parameterized by an integer $u$, which defines a block size. Every index of the form $i\cdot u$ is called an anchor. We denote $w\coloneq \lceil n/u\rceil$ as the number of blocks.

The algorithm consists of two main phases:

\begin{itemize}
\item \textbf{Phase 1: Coarse Radius Computation}
We compute a lower bound for the radius of every palindrome. We maintain a global array $\tilde{R}$ of lower bounds, initialized with zeros. We then perform $w$ iterations:
\begin{itemize}
    \item For each $i \in \{1,\dots,\,w\}$, we compute the self-convolution $M_i = \mconv(S[1 \dd \min\{iu,\,n\}])$.
    \item The resulting array $M_i$ identifies all the palindromic prefixes and suffixes of $S[1\dd iu]$. For every center $c$ that was identified as a palindrome with radius $r$, we set $\tilde{R}[c] \leftarrow \max\{\tilde{R}[c],\,r\}$.
\end{itemize}

\item \textbf{Phase 2: Naive Extension.}
After Phase 1, the array $\tilde{R}$ has all radii computed with a maximal additive error of $u$ per radius. Formally, the array $\tilde{R}$ guarantees for every center $c$ that $r_c-u< \tilde{R}[c]\le r_c$. To compute the radius precisely, we update $R[c]\gets \tilde{R}[c]+\operatorname{NaivePalExtend}(S,\,c,\,\tilde{R}[c])$ for every center position.
\end{itemize}
The full algorithm is presented in~\cref{alg:precise}, and an example can be found in~\cref{ssec:toy-run}.

\begin{lemma}[Correctness]
    When~\cref{alg:precise} finishes, $R$ contains the maximal radius $r_c$ for every center position $c$.
\end{lemma}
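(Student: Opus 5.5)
The proof splits into two parts. First we show that after Phase~1 every center $c$ satisfies $\tilde{R}[c]\le r_c$ and $\tilde{R}[c]>r_c-u$. Then we show that Phase~2 closes the remaining gap exactly.

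\textbf{Soundness, $\tilde{R}[c]\le r_c$.} Each update to $\tilde{R}[c]$ comes from some $M_i=\mconv(S[1\dd \min\{iu,n\}])$. By \cref{ob:pref-suf}, a reported palindromic suffix (or prefix) of $S[1\dd m]$ with center $c$ is a genuine subpalindrome of $S$, whose radius is $m-c$ (resp.\ $c-1$). By maximality of $\Pal_c$, this radius is at most $r_c$. Since $\tilde{R}$ is initialized to $0\le r_c$ and only takes maxima of such values, the invariant holds throughout.

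\textbf{Coarse accuracy, $\tilde{R}[c]>r_c-u$.} Fix a center $c$ and write $\Pal_c=[c-r_c\dd c+r_c]$. If $c+r_c=n$ or $c-r_c=1$, the palindrome is a suffix or prefix of the whole string.
\begin{itemize}
\item If it is a suffix of $S$, take the final iteration $i=w$: $M_w$ detects radius $n-c=r_c$ exactly.
\item If it is a prefix of $S$ (that is, $c-r_c=1$), take the first anchor $m=iu\ge c+r_c$, or $m=n$. Then $S[1\dd c+r_c]$ is a palindromic prefix of $S[1\dd m]$, so $M_i$ reports radius $c-1=r_c$.
\end{itemize}
Otherwise, let $i$ be the largest index with $m=iu\le c+r_c$. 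If $m\ge c$, then $S[2c-m\dd m]$ is a sub-interval of $\Pal_c$ with the same center, so it is palindromic. It is also a suffix of $S[1\dd m]$, hence $M_i$ reports radius $m-c>c+r_c-u-c=r_c-u$. If no anchor lies in $[c\dd c+r_c]$, then $r_c<u$, and the trivial bound $\tilde{R}[c]\ge 0>r_c-u$ already suffices. Edge cases near $n$ are covered by the prefix $S[1\dd n]$ used in the last iteration.

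\textbf{Phase~2.} We are given $\tilde{R}[c]\le r_c$, so $[c-\tilde{R}[c]\dd c+\tilde{R}[c]]$ is palindromic. Wildcard matching is symmetric in the two compared characters, and a palindrome at center $c$ is determined by the pairwise conditions $S[c-t]\matches S[c+t]$ for $t\le r$. Hence $\operatorname{NaivePalExtend}$ starting at $\tilde{R}[c]$ compares exactly the pairs $t=\tilde{R}[c]+1,\dots$ and stops at the first $t$ with a mismatch or a boundary, which is $t=r_c+1$. It therefore returns $r_c-\tilde{R}[c]$, and $R[c]=r_c$.

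The main obstacle is the coarse-accuracy step. Two things need care there: correctly identifying which iteration certifies the bound, and the boundary handling when $\min\{iu,n\}$ truncates. A related subtlety is that wildcard matching is not transitive. This does not break the argument, because palindromicity is checked pairwise and every sub-interval with the same center inherits those pairwise matches.
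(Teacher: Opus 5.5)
Your proof is correct and matches the paper's argument: every update to $\tilde{R}$ is sound ($\tilde{R}[c]\le r_c$), and $\operatorname{NaivePalExtend}$, started from a valid lower bound, stops exactly at the first failing pair $t=r_c+1$. Your coarse-accuracy step $\tilde{R}[c]>r_c-u$ is also correct, but this lemma does not need it, so it is not the ``main obstacle'' here; the paper proves that bound only in the runtime lemma (by contradiction, via an anchor in $(c+\tilde{R}[c],\,c+r_c]$), where it bounds Phase~2 by $\Oh{u}$ work per center.
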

\begin{proof}
The algorithm is correct because $\tilde{R}$ never overestimates radii. If $\tilde{R}[c]=r$, it means that the self-matching-convolution detected a palindrome with radius $r$ at center $c$, hence $r_c\ge r$. Consequently, the naive extension in phase 2 begins with a valid lower bound $\tilde{R}[c]$ and extends it until a mismatch is found or no further extension is possible, which is the definition of maximal subpalindrome, as required.
\end{proof}

\begin{lemma}[Runtime]
    The runtime of~\cref{alg:precise} is $\Oh{\frac{n^2}{u}\log n+nu}$.
\end{lemma}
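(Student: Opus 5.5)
The bound splits along the two phases of~\cref{alg:precise}. I will bound each phase separately and then add the two terms; the cost of the $\Oh{n}$ initialization of $\tilde{R}$ and $R$ is absorbed in either term.

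\textbf{Phase~1.} There are $w=\lceil n/u\rceil$ iterations. In iteration $i$ the algorithm computes $M_i=\mconv(S[1\dd \min\{iu,n\}])$ on a string of length at most $n$. By the bound stated in the preliminaries, one matching convolution of a length-$m$ string with its reverse costs $\Oh{m\log m}\subseteq\Oh{n\log n}$. Scanning $M_i$ to update $\tilde{R}$ takes time linear in $|M_i|=\Oh{n}$, since the radius at each reported center is read off in constant time from the window boundaries as $\lceil\min\{c-\ell,r-c\}\rceil$. Summing over the $w=\Oh{n/u}$ iterations gives $\Oh{\frac{n^2}{u}\log n}$. (Summing $iu\log(iu)$ exactly gives the same order, so the crude bound loses nothing.)

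\textbf{Phase~2.} For each of the $\Oh{n}$ centers, $\operatorname{NaivePalExtend}(S,c,\tilde{R}[c])$ performs $r_c-\tilde{R}[c]+1$ comparisons: it makes one comparison per successful extension plus the final failing comparison. So it suffices to show the invariant $r_c-u<\tilde{R}[c]\le r_c$ claimed in the algorithm description. The upper bound is the correctness argument. For the lower bound, fix $c$ and let $\Pal_c=[c-r_c\dd c+r_c]$. Let $i^\ast$ be the largest $i$ with $iu\le c+r_c$, capped at $w$.
\begin{itemize}
\item If $c+r_c\ge n$, then $\Pal_c$ is a palindromic suffix of $S[1\dd n]$. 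It is detected in the last iteration with the exact radius $\min(c-1,n-c)=r_c$.
\item Otherwise, $iu\le c+r_c<(i+1)u$ for $i=i^\ast$. The factor $[2c-iu\dd iu]$ is a subpalindrome of $\Pal_c$ (if $i^\ast u\ge c$) and is a suffix of $S[1\dd iu]$. By \cref{ob:pref-suf} it is detected in iteration $i$, with radius $iu-c>r_c-u$.
\item The remaining case is that $c$ lies in the first partial block beyond its radius, so no anchor lies in $[c\dd c+r_c]$. I will handle it by showing $r_c<u$ there, so that $\tilde{R}[c]=0>r_c-u$ trivially. This also covers the case of centers near the left end, where the palindrome is clipped by position $1$.
\end{itemize}

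Hence each extension costs $\Oh{u}$, and Phase~2 costs $\Oh{nu}$ in total. Adding the two phases yields $\Oh{\frac{n^2}{u}\log n+nu}$.

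The main obstacle is the case analysis establishing $r_c-\tilde{R}[c]<u$, especially the boundary cases. These are the centers with no anchor in $[c\dd c+r_c]$ and the palindromes clipped at either end of $S$. I would handle them first by the argument above that $r_c<u$ whenever no anchor lies in $[c\dd c+r_c]$.
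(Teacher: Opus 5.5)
Your proposal is correct and is essentially the paper's proof: Phase~1 is $w=\Oh{n/u}$ convolutions of cost $\Oh{n\log n}$ each. Phase~2 costs $\Oh{u}$ per center because an anchor inside the right arm of $\Pal_c$ makes the part of $\Pal_c$ ending at that anchor a palindromic suffix of the corresponding prefix, which \cref{ob:pref-suf} detects. The paper instead argues by contradiction with an anchor $ju$ satisfying $c+\tilde{R}[c]<ju\le c+r_c$ (such an anchor exists whenever $r_c-\tilde{R}[c]\ge u$), so it needs no boundary cases. Your deferred third case, which you state without proof, follows from the fact that any $u$ consecutive integers contain a multiple of $u$, giving $r_c\le u-2$ there.
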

\begin{proof}
The algorithm is divided into two phases: the coarse radius computation (FFT) and the naive extension. The coarse radius computation runs $w$ times, each time performing a matching convolution in $\Oh{n\log n}$ time, hence the total time of phase 1 is $\Oh{wn\log n}=\Oh{\frac{n^2}{u}\log n}$.

The second phase performs naive extension for each center. We proceed to prove that for every center, the naive subroutine performs $\Oh{u}$ comparisons.

Assume for contradiction that for some center $c$, the routine $\operatorname{NaivePalExtend(S, c, \tilde{R}[c])}$ extended more than $u$ characters, i.e., $r_c-\tilde{R}[c]\ge u$.

Since the difference between the estimated radius $\tilde{R}[c]$ and the real radius $r_c$ is at least $u$, an anchor $ju$ exists such that $c+\tilde{R}[c] < ju \le c+r_c$.

In phase one, we compute the matching-self-convolution of all prefixes of the form $S[1\dd iu]$, including $\mconv(S[1\dd ju])$. Since $c+r_c\ge ju$, the string $S[1\dd ju]$ has a palindromic suffix at center $c$, and by \cref{ob:pref-suf} we would have detected it. However, since $c+\tilde{R}[c]< ju$, it follows that no palindromic suffix at center $c$ was identified, contradicting the observation.

Since $r_c-\tilde{R}[c]<u$, each of the $\Oh{n}$ invocations of $\operatorname{NaivePalExtend}$ performs $\Oh{u}$ comparisons and a total of $\Oh{nu}$ work, as required.
\end{proof}
The space consumption is linear because all the subroutines we invoke consume linear space, and the only non-constant memory we store is the arrays $R$ and $\tilde{R}$.

We conclude by balancing the runtime. The running time of~\cref{alg:precise} is $T(n)= \Oh{\frac{n^2}{u}\log n+nu}$ or $T(n)= \Oh{n\left(\frac{n}{u}\log n+u\right)}$. We minimize $T(u)$ by balancing $\frac{n}{u}\log n$ and $u$:
\[
\frac{n}{u} \log n \approx u \implies u^2 \approx n \log n \implies u = \sqrt{n \log n}.
\]
By plugging $u=\sqrt{n \log n}$ into $T(n)$, the overall running time is $T(n)= \Oh{n \sqrt{n \log n}}$, as summarized in the following theorem:
\begin{theorem}
    Let $S$ be a wildcard string of length $n$. The radii of all maximal subpalindromes of $S$ can be found in $\Oh{n\sqrt{n\log n}}$ time and $\Oh{n}$ space.
\end{theorem}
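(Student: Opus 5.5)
The plan is to instantiate \cref{alg:precise} with block size $u\coloneq\lceil\sqrt{n\log n}\rceil$ and combine the three facts already established: the Correctness lemma, the Runtime lemma, and the linear-space remark.

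\textbf{Reduction to odd centers.} First I will apply the simplifying transformation $S'=S[1]\$S[2]\$\dots\$S[n]$. This makes every subpalindrome of $S$ an odd-length subpalindrome of $S'$, and it only doubles the length, so $|S'|=2n-1$. A radius computed in $S'$ translates back to the corresponding radius in $S$ by a constant-time arithmetic map per center. This changes neither the asymptotic time nor the space.

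\textbf{Correctness.} Next I will run \cref{alg:precise} on $S'$. By the Correctness lemma, on termination $R[c]=r_c$ for every center $c$. The key fact behind this is that $\tilde{R}$ never overestimates: every value written into it is a palindromic prefix or suffix detected by $\mconv$, certified by \cref{ob:pref-suf}. Phase~2 then extends greedily from this lower bound.

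\textbf{Time.} By the Runtime lemma the total time is $\Oh{\frac{n^2}{u}\log n + nu}$. With $u=\Theta(\sqrt{n\log n})$, both terms become $\Oh{n\sqrt{n\log n}}$. I will also handle the edge case $u\ge n$ (small $n$): then $w=1$, and the bound still holds trivially.

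\textbf{Space.} Each of the $w$ convolutions is computed on a prefix of length at most $n$ and uses $\Oh{n}$ working space. The buffer is discarded before the next iteration, so only $\tilde{R}$ and $R$ persist, each of size $\Oh{n}$. $\operatorname{NaivePalExtend}$ uses $\Oh{1}$ extra space.

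\textbf{Main obstacle.} The subtle step is inside the Runtime lemma: showing that every center needs fewer than $u$ extension steps. This hinges on the claim that if $r_c-\tilde{R}[c]\ge u$, then some anchor $ju$ lies in $(c+\tilde{R}[c],\,c+r_c]$. For that anchor, the prefix $S[1\dd ju]$ has a palindromic suffix centered at $c$, and its detected radius $ju-c$ would exceed $\tilde{R}[c]$, which is a contradiction.

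I will double-check two boundary cases here. The first is centers near the right end, where the final prefix is $S[1\dd n]$ rather than $S[1\dd wu]$; there the palindrome is capped by $n-c$, so Phase~2 terminates immediately. The second is that the detected radius is $\min(c-1,\,ju-c)$ rather than $ju-c$. If the left boundary binds, then $r_c=c-1$ already equals the detected value, so $\tilde{R}[c]=r_c$.
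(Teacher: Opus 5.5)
Your proposal is correct and matches the paper's argument: set $u=\Theta(\sqrt{n\log n})$ in the precise algorithm, invoke its Correctness and Runtime lemmas (the latter via the same anchor-in-$(c+\tilde{R}[c],\,c+r_c]$ contradiction), and note that only $R$ and $\tilde{R}$ persist across the linear-space convolutions. One small imprecision is in your boundary remark: right-end centers do not necessarily finish Phase~2 immediately, but no special case is needed, because any multiple of $u$ in that interval is at most $c+r_c\le n$, so its prefix was convolved.
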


\begin{algorithm}[ht]
\caption{Detect All Maximal Subpalindromes}
\label{alg:precise}
\begin{algorithmic}[1]
    \REQUIRE String $S$ of length $n$, block size $u\ge 1$.
    \ENSURE An array $R$ satisfying $R[c]=r_c$ for every center $c$.
    \STATE $\tilde{R} \gets (0,\ldots,0)$ \COMMENT{Radius lower bound for each center.}
    \STATE $R \gets (0,\ldots,0)$ \COMMENT{Precise radius for each center.}
    \FOR{$i \gets 1$ \TO $\lceil \frac{n}{u}\rceil $} \label{alg:for-i}
        \STATE $M \gets \mconv\!\big(S[1\dd \min\{i\cdot u,\, n\}])$ \label{alg:conv}
        \FORALL{palindromes $(c,\,r)$ identified by $M$} \label{alg:matches}
            \STATE $\tilde{R}[c] \gets \max\{\tilde{R}[c],\, r\}$ \label{alg:update}
        \ENDFOR
    \ENDFOR
    \FORALL{centers $c$}
        \STATE $R[c] \gets \tilde{R}[c]+\operatorname{NaivePalExtend}(S,\,c,\,\tilde{R}[c])$
    \ENDFOR
    \RETURN $R$
\end{algorithmic}
\end{algorithm}

%% file: figures/example_precise.tex
\newcommand{\stringtemplate}[3]{
  \foreach \v [count=\i] in {b, $\phi$, b, a, a, c, $\phi$, c, e, c, $\phi$, c, c} {
    \draw[thick] (\i,0) rectangle (\i+1,0.5);
    \node (v\i) at (\i+0.5,0.25) {\v};
    \node at (\i+0.5,-0.25) {\scriptsize \i};
  }

    \foreach \v in {#2} {
      \node[circle, draw=ForestGreen] at ({\v + 0.5}, 0.5) {};
      \node[circle, fill=ForestGreen, inner sep=1pt] at ({\v + 0.5}, 0.5) {};
    }
    \foreach \v in {#3} {
      \node[circle, draw=red] at ({\v + 0.5}, 0.5) {};
      \node[circle, fill=red, inner sep=1pt] at ({\v + 0.5}, 0.5) {};
    }

  \draw[blue,thick,decorate,decoration={brace,mirror,raise=2em,amplitude=3pt},yshift=-2em]
  ($ (v1) + (-0.3, 0) $)
  --node[midway,below=6pt+1.5em] {$iu=#1$} 
  ($ (v#1) + (0.3, 0) $);

}

\newcommand{\newsubfigure}[2][1]{
    \begin{subfigure}{\textwidth}
    \centering
    \begin{tikzpicture}[scale=#1]
        #2
    \end{tikzpicture}
    \end{subfigure}
}

\newcommand{\subfigsep}{
    \vspace{-0.5em}
    \hrule
    \vspace{0.5em}
}

\begin{figure}[ht]
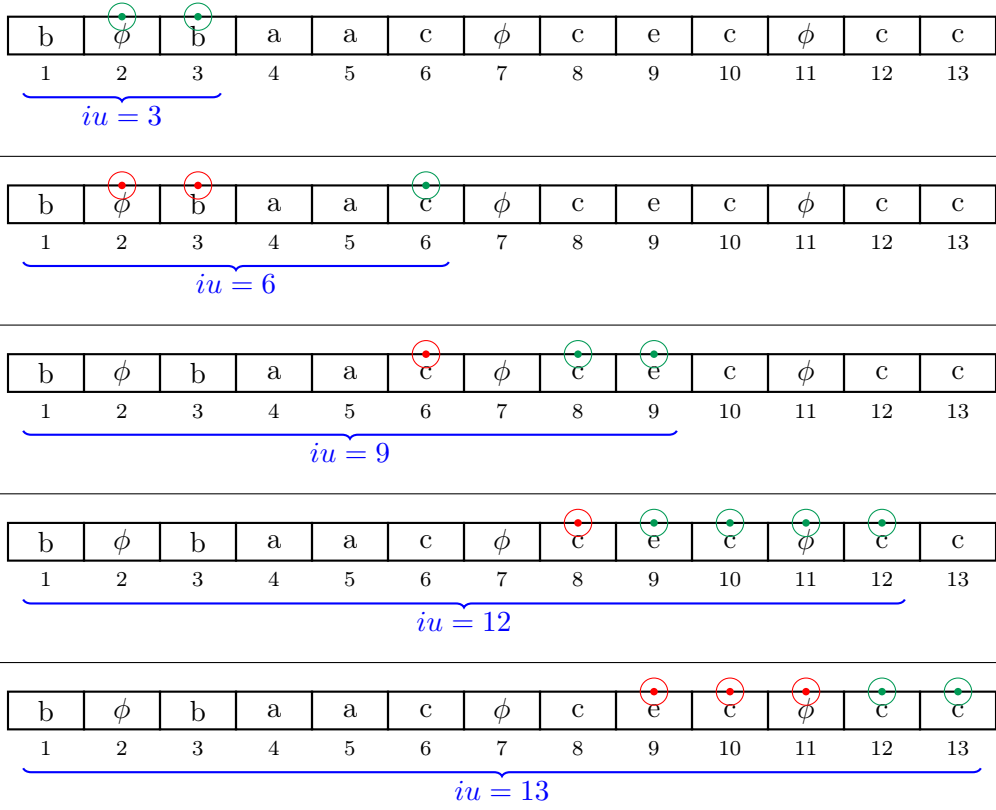

    \newsubfigure{
    \stringtemplate{3}{2,3}{}
    }
    \subfigsep
    \newsubfigure{
        \stringtemplate{6}{6}{2,3}
    }
    \subfigsep
    \newsubfigure{
        \stringtemplate{9}{9,8}{6}
    }
    \subfigsep
    \newsubfigure{
        \stringtemplate{12}{12,11,10,9}{8}
    }
    \subfigsep
    \newsubfigure{
        \stringtemplate{13}{13, 12}{9,10,11}
    }
    \caption{Demonstration of the precise algorithm. At every step, we search for the longest palindromic suffixes. A center identified as a palindrome is marked with green. Once a center is no longer identified, it is marked in red, and the naive extension takes place.}
    \label{fig:precise}
\end{figure}

%% file: results/convolution_approx.tex
\newcommand{\R}{\tilde{R}}

The approximation algorithm finds a $(1+\epsilon)$ approximation for all palindromes. Formally, given a string $S$ and an approximation parameter $\epsilon>0$, the approximation algorithm computes an array $\R$ such that for every center $c$ the array satisfies $\R[c]\le r_c\le (1+\epsilon)\tilde{R}[c]$. Throughout the section, we denote $\delta=\frac{1}{2}\epsilon$.

\paragraph*{Simplifying assumptions.} We require $\epsilon\le 0.5$. Additionally, we assume that $|S|$ is a natural power of $2$.

\paragraph*{High-level description.} The idea is to run the self-matching-convolution on exponentially-increasing factors of $S$. The factors are defined by a set of anchors $\mathcal A$ and a window size $u$, and are chosen to be $S[a-u\dd a+u]$ for every $a\in \mathcal A$. For example, if $\mathcal A=\{6, 8\}$ and $u=4$, the factors concerned are $S[2\dd 10]$ and $S[4\dd 12]$.

The algorithm initializes an approximation array $\R$, sets the initial window size to $u=u_0\approx \frac{1}{\delta}$, and chooses $\mathcal A\gets \{\frac{iu}{2}\;:\;i\in \mathbb N\}$ as the set of anchors. For each center position, the algorithm initializes $\R[c]\gets \operatorname{NaivePalFind}(S,\,c,\,u_0)$.

In the main loop, the algorithm computes the matching-self-convolution around the anchor positions, as described. When a palindrome is detected at center $c$ by the computation of its closest anchor $a$, the approximation array $\R$ is updated with the new value. At every doubling of the window size, the algorithm removes half of the anchors to keep the running time low. At the end of each iteration, the algorithm increases the window size $u$ by a factor of $(1+\delta)$. The algorithm terminates when the window size reaches $n$, at which point $\R$ is a $(1+\epsilon)$-approximation of all maximal radii.

The algorithm is presented as~\cref{alg:approx}, and an example can be found in~\cref{ssec:example-approx}.
\begin{algorithm}[ht]
\caption{Approximate All Maximal Palindromes}\label{alg:approx}
\begin{algorithmic}[1]
    \REQUIRE String $S$ of length $n$, approximation parameter $0 < \epsilon \leq 0.5$.
    \ENSURE An array $\R$ with $(1+\epsilon)$-approximation of the maximal palindromes of $S$.

    \STATE $\delta\gets \frac{1}{2}\epsilon$
    \STATE $\R\gets (0,\,\dots,\,0)$ \COMMENT{Radius lower bound for each center.}
    \STATE $u', u\gets \operatorname{AlignUpToPowerOfTwo}(\frac{1}{\delta})$
    \FORALL {center position $c$}
        \STATE $\R[c]\gets \operatorname{NaivePalFind}(S,c,u)$ \COMMENT{Limits to at most $u$ comparisons.}
    \ENDFOR

    \STATE $\mathcal{A} \gets \{\frac{iu}{2}\;:\;1\le i\le \frac{2n}{u}\}$
    \WHILE{$u < n/2$}        
        \FOR{$a\in \mathcal{A}$}
            \STATE $\ell\gets \max\{1,\;a-u\};\quad r\gets \min\{n,\;a+u\}$
            \STATE $M\gets \mconv(S[\ell\dd r])$
            \FOR {every center $c$ identified by $M$}
                \STATE $\tilde{r}_c\gets \min(c-\ell,r-c)$\label{alg:lin:pal-len}
                \STATE $\R[c]\gets \max\{\R[c],\tilde{r}_c\}$
            \ENDFOR
        \ENDFOR
        \STATE $u \gets (1+\delta)u$
        \IF {$u\ge 2u'$}
            \STATE $u'\gets 2u'$
            \STATE $\mathcal A\gets \{\mathcal A[2i]\;:\;2i\le |\mathcal A|\}$ \label{alg:lin:rem} \COMMENT{Remove every second element from $\mathcal{A}$.}
        \ENDIF
    \ENDWHILE

    \RETURN $\R$
\end{algorithmic}
\end{algorithm}
\paragraph*{Proof notation.} Throughout the correctness and runtime proofs, we refer to variable snapshots of the algorithm execution. We do so by denoting variable $x$ at iteration $i$ as $x_i$. For example, the variable $u_i$ is the window size in the $i$-th iteration of the outer loop. For this particular example, the relation $u_i=(1+\delta)^{i-1}u_0$ holds where $u_0$ is the initial window size ($\approx \frac{1}{\delta}$). For simplicity, we do not refer to any of the internal loops' variables.

We refer to the factor $S[\max\{1,\,a-u_i\}\dd \min\{n,\,a+u_i\}]$ as the {\em window} of anchor $a$ at iteration $i$.

We define the anchor of center $c$ at iteration $i$ as the anchor that maximizes the radius of the palindrome centered at $c$. Usually, it would be the anchor maximizing $u-|c-a|$. However, next to the borders the radius might be prematurely trimmed.
Formally, the radius computed for center $c$ in iteration $i$ at anchor $a$ is:
\[
\operatorname{rad}(c,\;i,\;a)=\min\{c-\max\{1, a-u_i\},\; \min\{n, a+u_i\}-c\},
\]
and the anchor of center $c$ in iteration $i$ is:
\[
\operatorname{anchor}(c,\,i)\coloneq \argmax_{a\in \mathcal A_i} \operatorname{rad}(c,\;i,\;a).
\]
A center has at most two candidate anchors; when there are two, we choose the one with the smaller index.

\begin{lemma}[Runtime]
    The running time of~\cref{alg:approx} is $\Oh{\frac{1}{\epsilon}n\log^2 n}$.
\end{lemma}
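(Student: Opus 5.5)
The plan is to bound separately the initialization cost and the cost of the main while loop. Then I will sum the per-iteration costs over all iterations, counting how many iterations the loop runs and what each one costs.

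\textbf{Initialization.} There are $\Oh{n}$ center positions. Each call $\operatorname{NaivePalFind}(S,c,u_0)$ costs $\Oh{u_0}=\Oh{1/\delta}=\Oh{1/\epsilon}$. So the initialization costs $\Oh{n/\epsilon}$ in total, which is dominated by the final bound.

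\textbf{Number of iterations.} The window size grows as $u_i=(1+\delta)^{i-1}u_0$, and the loop stops once $u\ge n/2$. Hence the number of iterations is $\Oh{\log_{1+\delta}(n/u_0)}=\Oh{\frac{\log n}{\log(1+\delta)}}$. Since $\delta\le 1/4$, we have $\log(1+\delta)\ge \delta/2$, so there are $\Oh{\frac{1}{\delta}\log n}=\Oh{\frac{1}{\epsilon}\log n}$ iterations.

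\textbf{Cost of one iteration.} In iteration $i$, each anchor $a\in\mathcal A_i$ triggers one matching convolution on a window of length at most $2u_i+1$. By the stated bound for $\mconv$, this costs $\Oh{u_i\log u_i}=\Oh{u_i\log n}$. Scanning $M$ and updating $\R$ costs $\Oh{u_i}$, which is absorbed. The key invariant is
\[
|\mathcal A_i|\cdot u_i=\Oh{n}.
\]
Initially $|\mathcal A|=2n/u'$ with $u=u'$. The thinning step on line~\ref{alg:lin:rem} halves $\mathcal A$ exactly when $u'$ doubles. So $|\mathcal A_i|\le 2n/u'_i$ always holds (up to rounding from the floor in removal, which only helps). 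Moreover, the thinning fires as soon as $u\ge 2u'$, and $u$ grows by a factor $1+\delta\le 2$ per step, so $u_i<2u'_i$ at the start of every iteration. Combining these gives $|\mathcal A_i|u_i\le (2n/u'_i)\cdot 2u'_i=4n$. Therefore each iteration costs $\Oh{n\log n}$.

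\textbf{Summing.} Multiplying $\Oh{\frac{1}{\epsilon}\log n}$ iterations by $\Oh{n\log n}$ per iteration gives $\Oh{\frac1\epsilon n\log^2 n}$, plus the $\Oh{n/\epsilon}$ initialization.

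The main obstacle is the invariant $|\mathcal A_i|u_i=\Oh{n}$. It needs care with the bookkeeping between $u$ and $u'$: we must show the halving happens in the same iteration as the doubling of $u'$, and that $u$ never overshoots $2u'$ by more than a factor $1+\delta$ before thinning occurs. I would prove it by induction on $i$, tracking the pair $(u_i,u'_i)$ and using the invariant $u'_i\le u_i<2u'_i$.
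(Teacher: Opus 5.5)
Your proposal is correct and follows the paper's own proof: $\Oh{\frac{1}{\epsilon}\log n}$ iterations, each costing $\Oh{|\mathcal A_i|\,u_i\log u_i}=\Oh{n\log n}$ because the thinning step keeps $|\mathcal A_i|\le 4n/u_i$. You also spell out two details the paper leaves implicit, namely the $\Oh{n/\epsilon}$ initialization cost and the inductive invariant $u'_i\le u_i<2u'_i$ (which relies on $1+\delta\le 2$, so one halving per iteration suffices); neither changes the argument.
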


\begin{proof}
In the main (outer) loop, we start with a window size $u_0\approx \frac{1}{2\epsilon}$ and multiply it by $(1+\delta)$ at each step, until it reaches $n$, resulting in $\Oh{\log_{1+\delta} n}=\Oh{\frac{1}{\epsilon}\log n}$ iterations. 

At iteration $i$, we perform a matching convolution and iterate its results on strings of size $\Oh{u_i}$. We do so for each anchor, resulting in $\Oh{|\mathcal A_i|u_i\log u_i}$ time for iteration $i$.

The algorithm initially starts with $\frac{2n}{u_0}$ anchors. Every time the window size is doubled, every second anchor is deleted. Therefore, there are at most $\frac{4n}{u_i}= \Oh{\frac{n}{u_i}}$ anchors in $\mathcal A_i$, hence the running time of iteration $i$ is $\Oh{n\log u_i}\subseteq \Oh{n\log n}$.

The total running time is thus $\Oh{\frac{1}{\epsilon}\log n \cdot n\log n}=\Oh{\frac{1}{\epsilon}n\log^2 n}$, as required.
\end{proof}

\begin{lemma}[Correctness]
    The result of~\cref{alg:approx} is a $(1+\epsilon)$-approximation for every center.
\end{lemma}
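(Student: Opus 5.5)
We prove two inequalities for every center $c$: the soundness bound $\R[c]\le r_c$ and the approximation bound $r_c\le(1+\epsilon)\R[c]$.

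\textbf{Soundness.} Every update to $\R[c]$ comes from one of two sources. One is $\operatorname{NaivePalFind}$, which returns $\min(u_0,r_c)\le r_c$. The other is a matching convolution on a window $S[\ell\dd r]$ that reported a palindromic prefix or suffix centered at $c$ with radius $\min(c-\ell,r-c)$. By \cref{ob:pref-suf}, such a factor is a genuine subpalindrome of $S$, so its radius is at most $r_c$.

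\textbf{Approximation, small radii.} If $r_c\le u_0$, the initialization already gives $\R[c]=r_c$, and we are done. So assume $r_c>u_0$.

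\textbf{Approximation, large radii.} Let $i$ be the last iteration whose window size satisfies $u_i/2\le r_c$, roughly. The idea is to pick an iteration where the window around the anchor of $c$ fits inside $\Pal_c$, yet still yields a detected radius comparable to $r_c$.

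We first establish an invariant on anchor spacing. At iteration $i$ the anchors in $\mathcal A_i$ are spaced $u'_i/2$ apart, where $u'_i\le u_i<2u'_i$. Hence every center $c$ lies within distance $u'_i/4\le u_i/4$ of some anchor $a$. This holds because halving the anchor set happens exactly when $u'$ doubles, so the spacing is always $u'/2$.

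Consider the window $[a-u_i,a+u_i]$ for this anchor. One of its endpoints is the one nearer to $c$, at distance at least $u_i-u_i/4=\tfrac34u_i$ from $c$. The factor from that endpoint to its mirror image about $c$ is contained in $\Pal_c$ as long as $\tfrac34 u_i\le r_c$ (up to the boundary trimming handled by $\max\{1,\cdot\}$ and $\min\{n,\cdot\}$). It is then a palindromic prefix or suffix of the window, so by \cref{ob:pref-suf} it is detected. Thus $\R[c]\ge\min(c-\ell,r-c)$, with one endpoint at distance at least $\tfrac34u_i$.

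This lower bound alone is too weak for a $(1+\epsilon)$ factor, so we refine it. Instead of an arbitrary iteration, choose $i$ such that the near endpoint $e_i$ of the window satisfies $|c-e_i|\le r_c<|c-e_{i+1}|$, measured in the successive iterations.

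\textbf{Main obstacle.} The hard step is showing that the detected radius grows multiplicatively by at most $(1+\delta)$ per iteration, including across anchor-thinning steps, where the anchor of $c$ may change. Within a fixed anchor $a$, moving from $u_i$ to $u_{i+1}$ increases the distance from $c$ to the near endpoint from $u_i-|c-a|$ to $u_{i+1}-|c-a|$. This increase is $\delta u_i$, which is at most $\tfrac43\delta(u_i-|c-a|)$. Since $\delta=\epsilon/2$, the ratio is at most $1+\tfrac23\epsilon\le1+\epsilon$.

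At a thinning step the anchor may move farther away, but this only decreases the new detected radius. The bound we need, $r_c<(\text{detected radius at }i+1)\le(1+\epsilon)\cdot(\text{detected radius at }i)$, uses the maximal achievable radius at $i+1$. That radius is at most $u_{i+1}-|c-a_{i+1}|$, and I would bound it against the radius at $i$ using the old anchor set, which is a superset of the new one.

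Boundary cases, where $\ell=1$ or $r=n$, only make the window extend to the string end. In that case the palindrome is detected with radius equal to the distance to the boundary, which is at least $r_c$. Termination at $u\ge n/2$ guarantees that some window with near endpoint beyond $r_c$ exists, so the iteration $i+1$ above is well defined.
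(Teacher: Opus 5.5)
Your proposal is correct and follows essentially the paper's argument. You take the iteration $j$ at which $c$ is last detected by its anchor. You then bound the achievable radius at iteration $j+1$ by $\operatorname{rad}(c,j,a)+\delta u_j$ for the new anchor $a\in\mathcal{A}_{j+1}\subseteq\mathcal{A}_j$, and lower-bound $\tilde{R}[c]$ by a constant fraction of $u_j$ using the anchor spacing.

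The differences are minor. Your sharper spacing bound $u'_j/4\le u_j/4$ gives $\tilde{R}[c]\ge\frac34u_j$ and ratio $1+\frac23\epsilon$, instead of the paper's $\frac12u_j$ and $1+\epsilon$. You also explicitly cover soundness, the case $r_c\le u_0$ via $\operatorname{NaivePalFind}$, and the final iteration. On that last point, phrase iteration $j+1$ as the unexecuted window size $u\ge n/2$ with anchor $n/2$, since no executed window need exceed $r_c$.
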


\begin{proof}
Our proof demonstrates that the computed radius estimate for any center grows by a factor of at most $(1+\epsilon)$ per iteration, which implies that the approximation will be in range. We denote by $j$ the last iteration where $c$ is identified as a palindrome \emph{by its anchor}. 

In each iteration $i$, the anchors are spaced by $\le \frac{u_i}{2}$ indices. Denote the two anchors closest to $c$ in iteration $j$ as $a_0,\;a_1$, where $a_0\le c\le a_1$. Since both satisfy $|c-a_0|,\;|c-a_1|\le \frac{u_j}{2}$, at least one of them---say $a_0$---lies between $c$ and the string center $\frac{n}{2}$ (or coincides with $c$). The window of $a_0$ therefore extends at least as far as a window centered at $c$ would, so $\operatorname{rad}(c,\; j,\;a_0)\ge u_j - |c-a_0| \ge \frac{u_j}{2}$. Therefore, $\R[c]\ge \frac{u_j}{2}$.

Denote $a=\operatorname{anchor}(c,\;j+1)$. The window size $u$ grows by a factor of $(1+\delta)$ every iteration, hence $\operatorname{rad}(c,\;j+1,\;a)\le \operatorname{rad}(c,\;j,\;a)+\delta u_j$. Since $c$ was not identified in the $(j+1)$-th iteration, it follows that $r_c<\operatorname{rad}(c,\;j,\;a)+\delta u_j$, and since $c$ was identified by its anchor in the $j$-th iteration, then $\operatorname{rad}(c,\;j,\;a)\le \R[c]$.
We evaluate the ratio $\frac{r_c}{\R[c]}$:
\[
\frac{r_c}{\R[c]}
<
\frac{\R[c]+\frac{1}{2}\epsilon u_j}{\R[c]}
=
\frac{2\R[c]+\epsilon u_j}{2\R[c]}
=
1+\frac{\epsilon u_j}{2\R[c]}
\le
1+\frac{\epsilon u_j}{2\cdot\frac{1}{2}u_j}
=
1+\frac{\epsilon u_j}{u_j}
=
1+\epsilon,
\]
which is the desired approximation bound.

\end{proof}

%% file: figures/example_approx.tex
\newcommand{\stringtemplatex}[6]{
    \def\c{5};
  \foreach \v [count=\i] in {b, $\phi$, b, a, a, c, $\phi$, c, e, c, $\phi$, c, c, a, b, $\phi$} {
    \ifthenelse{ \i < #5 \OR \i > #6} {
        \draw[thick] (\i,0) rectangle (\i+1,0.5);
    } {
        \draw[thick, fill=black!15!white] (\i,0) rectangle (\i+1,0.5);
    }
    \node (v\i) at (\i+0.5,0.25) {\v};
    \node at (\i+0.5,-0.25) {\scriptsize \i};
  }

    \pgfmathtruncatemacro{\jump}{floor(16/#1)}
    \pgfmathtruncatemacro{\low}{floor(\jump/2)}
    \foreach \k [evaluate=\k as \i using int(\k*\jump)] in {1,...,#1} {
      \node[circle, draw=red] at ({\i + 0.5}, 0.5) {};
      \node[circle, fill=red, inner sep=1pt] at ({\i + 0.5}, 0.5) {};
      \coordinate (p\i) at (\i+0.5, 0.5);
    }

    \foreach \k [evaluate=\k as \i using int(\k*\jump-\low)] in {1,...,#1} {
      \node[circle, draw=red] at ({\i + 0.5}, 0) {};
      \node[circle, fill=red, inner sep=1pt] at ({\i + 0.5}, 0) {};
      \coordinate (p\i) at (\i+0.5, 0);
    }

  \coordinate (anc) at (\c + 0.5, -0.75);
  \coordinate (arrow) at (\c + 0.5, -0.5);
  \node at (anc) {$\operatorname{anchor}(\c)=#4$};
  \draw[blue, thick, ->] (arrow) -- (p#4.south);

    \draw[blue,thick,decorate,decoration={brace,raise=1em,amplitude=3pt}]
  ($ (v8) + (-#3, 0) $)
  --node[midway, above=1em+5pt] {$u=#2$} 
  ($ (v8) + (#2, 0) $);
}

\begin{figure}[ht]
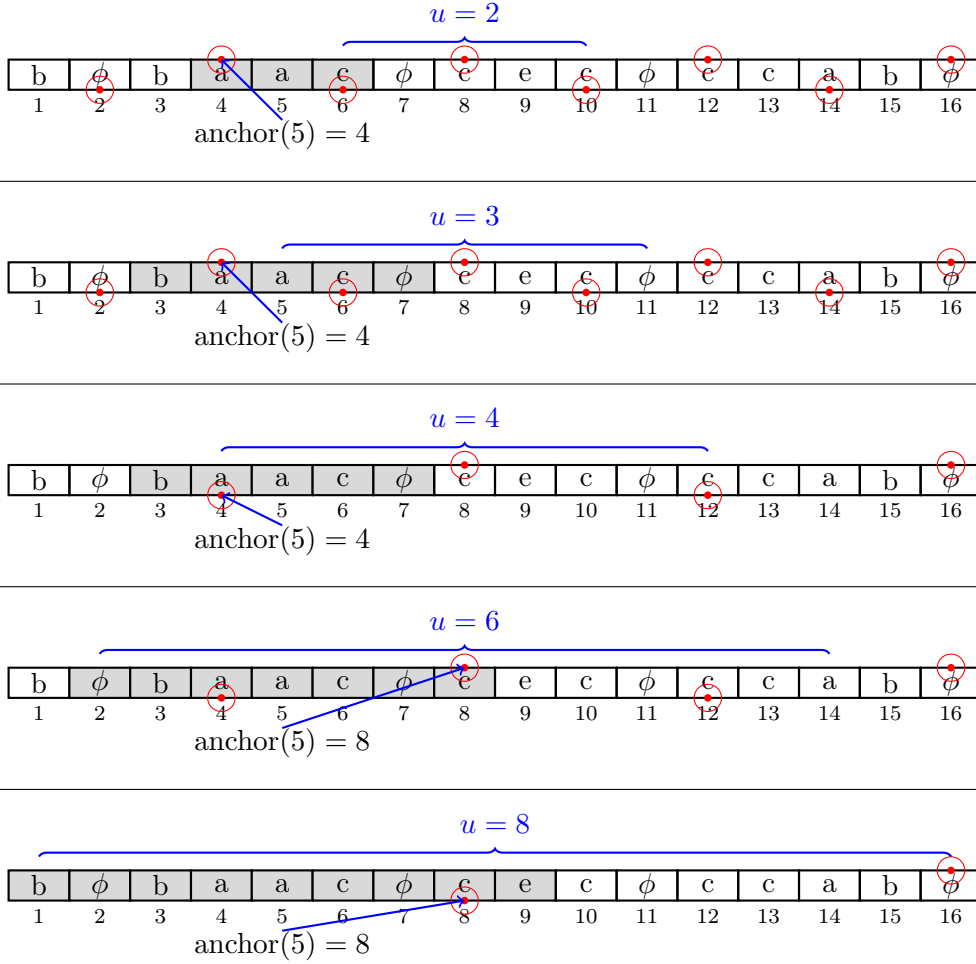

    \newsubfigure[0.8]{
        \stringtemplatex{4}{2}{2}{4}{4}{6}
    }
    \subfigsep
    \newsubfigure[0.8]{
        \stringtemplatex{4}{3}{3}{4}{3}{7}
    }
    \subfigsep
    \newsubfigure[0.8]{
        \stringtemplatex{2}{4}{4}{4}{3}{7}
    }
    \subfigsep
    \newsubfigure[0.8]{
        \stringtemplatex{2}{6}{6}{8}{2}{8}
    }
    \subfigsep
    \newsubfigure[0.8]{
        \stringtemplatex{1}{8}{7}{8}{1}{9}
    }
    \caption{Demonstration of the approximation algorithm. At every step, we perform for each anchor (marked with red) the palindrome prefix/suffix search routine on a string of size $2u+1$, centered around the anchor. In the example, we show the window of anchor $8$, and the anchor of center $5$. The anchor removal procedure occurs every time the window size doubles. The substring centered at $c=5$ that is checked at each iteration is highlighted in gray.}
    \label{fig:approx}
\end{figure}

%% file: results/with_mismatches.tex
In this section, we extend all of this paper's algorithms to find all maximal $k$-palindromes, i.e., maximal palindromes whose palindromic mismatch count (as defined in~\cref{s:pre}) is at most $k$.

\subsection{Baseline Algorithm}

The \textbf{LCE algorithm} requires no modification, and runs in time $\Oh{n(k+G)}$ by using the kangaroo method.
The \textbf{LCEW-based algorithm} can also be adapted by using kangaroo, causing each LCEW query to take $\Oh{kT}$ time, and the time to find all $k$-palindromes is $\Oh{n(kT+\frac{G}{T}\log n)}$. We balance $G\log n/T \approx kT$, which results in $T \approx \sqrt{G\log n/k}$.
\begin{itemize}
\item \textbf{For $G = \Omega(k\log n)$}, the LCEW method outperforms the LCE method in runtime.
\item \textbf{Otherwise,} the LCE method outperforms the LCEW method in both runtime and space.
\end{itemize}

\subsection{Convolution-Based Algorithms}

The most significant results are achieved by adapting our convolution-based methods. This extension relies on computing the Hamming distance between prefixes, which can be done via counting convolution ($\cconv$). We denote the time to compute $\cconv$ on a string $S$ of length $n$ as $f_S(n)$. For constant-sized alphabets, the result is immediate.
\begin{observation}
    Let $S$ be a string with wildcards over a \textbf{constant-size alphabet}. The FFT-based algorithms (\cref{alg:precise}, \cref{alg:approx}) can find all maximal $k$-palindromes \textbf{with no asymptotic increase} in runtime, as $f_S(n)= \Oh{n \log n}$.
\end{observation}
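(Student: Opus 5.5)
The plan is to run \cref{alg:precise} and \cref{alg:approx} unchanged, except that every call to $\mconv$ on a factor $X$ becomes a call to $\cconv$ on $X$. Wherever the original algorithm tested $M[\cdot]=1$, the adapted one tests $C[\cdot]\le 2k$. The factor $2$ comes from the palindromic mismatch count being $\frac12\Ham(X,\rev{X})$.

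\textbf{Detection.} First I would restate \cref{ob:pref-suf} in counting form. For $C=\cconv(X)$, the entry $C[i]$ equals $\Ham(X[1\dd i],\rev{X[1\dd i]})$, and symmetrically for suffixes, by \cref{def:conv-arr}. Hence a prefix or suffix of $X$ is a $k$-palindrome if and only if the corresponding entry is at most $2k$. This holds with wildcards as well, since $\Ham$ counts only non-matching pairs under $\matches$.

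\textbf{Correctness.} The radius of a maximal $k$-palindrome at $c$ is monotone: every $k$-palindrome at $c$ with radius $r$ contains $k$-palindromes at $c$ with all smaller radii. So the arguments of the correctness lemmas carry over verbatim. Two pieces need adjusting.
\begin{itemize}
\item $\operatorname{NaivePalFind}$ and $\operatorname{NaivePalExtend}$ must also count mismatches. Concretely, $\operatorname{NaivePalExtend}$ needs the mismatch count already consumed inside radius $\tilde R[c]$. It reads this directly from the stored $C$ entry when $\tilde R[c]$ is set, so I would store it in an extra array.
\item It then stops at the $(k+1)$-st mismatch rather than the first.
\end{itemize}
The anchor argument in the runtime lemma (an anchor $ju$ with $c+\tilde R[c]<ju\le c+r_c$ yields a detected suffix) uses only the monotonicity above, so it still gives additive error below $u$. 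The approximation analysis likewise uses only detection plus monotonicity.

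\textbf{Runtime.} With a constant-size alphabet, $\cconv$ costs $\Oh{m|\Sigma|\log m}=\Oh{m\log m}$, the same as $\mconv$. Each phase's bound therefore keeps its form, and linear space is preserved.

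\textbf{Main obstacle.} The step I expect to need care is the naive extension in \cref{alg:precise}. I would check that the remaining extension still costs $\Oh{u}$ comparisons even though we are counting mismatches. Rather than reusing the restart-free form directly, I would resume from the stored count, which makes the $\Oh{u}$ bound follow exactly as before.
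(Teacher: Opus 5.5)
Your proposal is correct and follows the same route as the paper. You replace $\mconv$ by $\cconv$, accept a prefix or suffix when its Hamming distance to its reverse is at most $2k$, and observe that for a constant-size alphabet $\cconv$ costs $\Oh{m\log m}$, so every bound keeps its form. Your extra points fill in details the paper leaves implicit: monotonicity of $k$-palindromes in the radius, and storing the mismatch count consumed at radius $\tilde{R}[c]$ so that $\operatorname{NaivePalExtend}$ still costs $\Oh{u}$. Your threshold $\le 2k$ also matches the paper's definition of a $k$-palindrome, whereas the paper's own text says $<2k$.
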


For general alphabets, $f_S(n)= \Oh{n\min(\sqrt{n\log n},\,|\Sigma|\log n)}$. We analyze our two algorithms in this context.

\textbf{Approximate Algorithm}
The algorithm is modified by replacing the matching convolution ($\mconv$) with a counting convolution ($\cconv$). If the Hamming distance is $<2k$, we denote it as a $k$-palindrome. We allow $2k$ mismatches since each mismatch appears in both arms of the palindrome. The overall runtime is similar to the original algorithm, with the $\Oh{n\log n}$ factor being replaced by $f_S(n)$, resulting in
$\Oh{\left(\frac{1}{\epsilon}\right) f_S(n)\log n}$
(Note: If $f_S(n) = \Omega(n^{1+\alpha})$ for some positive constant $\alpha$, the $\log n$ factor is absorbed.).

\textbf{Precise Algorithm}
Similar to before, we rebalance the runtime of phase one and phase two in~\cref{alg:precise}. The runtime is $T(u)= \Oh{\frac{n}{u} \cdot f_S(n) + nu}$, where $u$ is the block size:

\[
\frac{n}{u} f_S(n) \approx nu \implies u^2 \approx f_S(n) \implies u = \sqrt{f_S(n)}.
\]

Substituting $u$ into $T(n)$ results in $T(n)= \Oh{n\sqrt{f_S(n)}}$. \\
Given the upper bound $f_S(n)= \Oh{n\sqrt{n\log n}}$, we achieve a linear-space algorithm with a total runtime of 
$\Oh{n\sqrt{n\sqrt{n \log n}}} = \Oh{n^{7/4}\log^{1/4}{n}}$, significantly improving the $\Oh{n^2}$-time naive algorithm.

\subsection{No wildcards}

When the input text contains no wildcards (i.e., the task is to find palindromes with $k$ mismatches only), our previous methods still work, but their running time can be improved. While the work by Amir et al.~\cite{alp:04} on $k$ mismatches seems to be of aid, the original work was designed for traditional pattern matching, while we also need to detect all prefix and suffix alignments. Since the original paper uses the ``mark-and-sweep'' technique internally, a non-trivial adaptation of their technique is required.

A later result presented by Kaplan et al.~\cite{kps:06} gives a stronger (and slower) oracle that identifies the position of the $k$-th mismatch between a pattern and a text. Although their result does not address prefix/suffix alignments, the adaptation is natural. Our goal is to use this result to perform the {\em $k$-convolution}, defined below:
\begin{definition}[$k$-convolution]
    Let $S,\,T$ be two solid strings. The {\em $k$-convolution} of $S,\,T$, denoted by $\kconv(S,\,T)$ is the array $C_k$ (defined by the counting convolution):
    \[
    C_k[i]\coloneq \min(C[i],\,k+1),\qquad C=\cconv(S,\,T)
    \]
\end{definition}

The motivation for the $k$-convolution follows directly from our algorithms, which ignore any matches with more than $k$ unique mismatches (i.e., Hamming distance of $2k$). We proceed to state the results:
\begin{lemma}\label{lem:kconv}
    Let $S,\,T$ be two solid strings of the same length $n$. The $k$-convolution of $S,\,T$ can be computed in time $\Oh{nk^{2/3}\log^{1/3}{n} \log k}$.
\end{lemma}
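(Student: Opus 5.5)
The plan is to adapt the Kaplan--Porat--Shalom style $k$-mismatch oracle to all $2n-1$ alignments of $S$ against $T$, including prefix/suffix overlaps, and combine it with a frequency split. Since each entry of $\kconv$ is capped at $k+1$, every alignment can stop as soon as $k+1$ mismatches are certified. Alignments with small overlap (length at most $k$) are handled trivially by direct comparison. This costs $\Oh{k^2}$ in total, which is within the target bound since $k\le n$.

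The core step is a split of the alphabet by frequency with a threshold parameter $t$. A symbol $\sigma$ is \emph{frequent} if it occurs more than $t$ times in $S$ (or in $T$).

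\textbf{Infrequent symbols.} For every alignment we count matches contributed by infrequent symbols directly. For each pair of occurrences of the same infrequent symbol, one in $S$ and one in $T$, we increment the counter of the alignment they determine. The total cost is $\Oh{nt}$.

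\textbf{Frequent symbols.} There are at most $n/t$ frequent symbols. We cannot afford one FFT per symbol, so we follow the Amir et al./Kaplan et al. idea and do not count these matches exactly. Instead, we only need to certify whether each alignment has more than $k$ mismatches.

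An alignment with at most $k$ mismatches must match in all but $k$ positions. For such alignments, pick $2k$ ``representative'' positions and use them to prune candidates. The alignments that survive the pruning are verified by kangaroo jumps, using the LCE structure of \cref{lem:lcp-ds} on $S\cdot T$. Each verification costs $\Oh{k}$ time.

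The key counting fact makes this affordable. Any position of $S$ lies in at most $\Oh{n/k}$-sparse candidate sets, so the number of candidate alignments surviving the frequent-symbol filter is $\Oh{n/\,(\text{something in }k)}$. The frequent symbols that remain are grouped so that their count $n/t$ is traded against the verification cost.

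To obtain the exponent $k^{2/3}$, I would use block decomposition as in Kaplan et al.: split the overlap into blocks and run the rare/frequent argument per block. Balancing the three costs $nt$, $(n/t)\cdot n\log n/(\text{block})$, and $nk$-type verification terms at $t\approx (k^2\log n)^{1/3}$ is intended to give the claimed $\Oh{nk^{2/3}\log^{1/3} n}$ per level. A binary-search or doubling layer over the threshold, which locates the $(k+1)$-st mismatch, would contribute the extra $\log k$ factor.

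The main obstacle is making the pruning step work uniformly for the partial prefix/suffix overlaps. In the original setting the pattern fully overlaps the text, so marking arguments charge to a fixed pattern length. Here overlap lengths vary from $1$ to $n$, so I would group alignments by overlap length into dyadic classes and run the argument per class with $n$ replaced by the class length. I would then sum the resulting geometric series, checking that it costs only the logarithmic factor already accounted for.
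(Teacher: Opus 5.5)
There is a genuine gap, and it is at the step you flag as the main obstacle. Grouping alignments into dyadic classes by overlap length does not remove the partial overlaps. Within the class $\ell\in[L,2L)$ you still compare suffixes of $S$ against prefixes of $T$ of varying length, so the class is not a full-alignment instance of a fixed pattern in a fixed text. ``Run the argument with $n$ replaced by the class length'' is therefore the original problem again at a smaller scale.

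The idea you are missing is the paper's sentinel reduction:
\begin{enumerate}
\item Append $n$ copies of a fresh symbol $\$$ to $S$. The comparison of $S[i\dd n]$ with a prefix of $T$ then becomes a full alignment of $T$ against $S\$^n[i\dd i+n-1]$. Its extra mismatches are exactly the $i-1$ trailing sentinel positions.
\item A plain ``at most $k$ mismatches'' test is useless here, because the sentinels alone may contribute far more than $k$ mismatches. What you need is the \emph{position} $j_i$ of the $(k+1)$-th mismatch, which is exactly what the Kaplan et al.\ oracle reports.
\item If $j_i\le n-i+1$, the capped value is $k+1$. Otherwise exactly $j_i-(n-i+1)$ of the first $k+1$ mismatches are sentinels, so the value is $n+k+2-j_i-i$.
\item Padding $T$ instead handles the other half of the array. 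Appending $k+1$ further distinct sentinels guarantees every alignment has at least $k+1$ mismatches, so the oracle always reports a position.
\end{enumerate}
The oracle is thus used as a black box on strings of length $\Oh{n}$. The bound $\Oh{nk^{2/3}\log^{1/3}n\log k}$, including the $\log k$ factor, is simply its running time, and no internal adaptation is needed.

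Separately, the part of your proposal that is supposed to produce the bound is not yet an argument. The pruning via ``$2k$ representative positions'', the survivor count ``$\Oh{n/(\text{something in }k)}$'', the balancing that is ``intended to give'' $k^{2/3}$, and the ``binary-search or doubling layer'' for the $\log k$ factor are all placeholders. With $\Oh{k}$ kangaroo verification per surviving alignment, you would have to show that only about $n/k^{1/3}$ alignments survive, up to polylogarithmic factors, and nothing in the sketch establishes this. An ``$nk$-type'' verification term would already exceed the target. Reconstructing the internals of Kaplan et al.\ (or Amir et al.) and then modifying them for variable overlaps is precisely the non-trivial adaptation the sentinel trick lets you avoid.
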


The proof idea is to append sentinel values to the strings such that suffixes are considered part of the original string. Since the sentinel values are guaranteed to create a mismatch, the Hamming distance can be derived from them. We proceed to the formal proof:
\begin{proof}[Proof for~\cref{lem:kconv}]
    We refer to the oracle that finds the position of the $k$-th mismatch as $\operatorname{Pos}_k(S,\,T)$, assuming $|T|\le |S|$.

    Let $\$$ be a fresh symbol not in the alphabet. Denote $S'=S\$^n$ (i.e., the string $S$ with $n$ sentinel characters appended), and invoke $\operatorname{Pos}_{k+1}(S',\,T)$. Let $j_i$ be the position of the $(k+1)$-th mismatch between $T$ and $S'[i\dd i+n-1]$. If $j_i\le n-i+1$, then the $(k+1)$-th mismatch is within the borders of the original string $S$, and the $k$-convolution should be $k+1$ for position $i$. However, if $j_i>n-i+1$, it implies that exactly $j_i+i-1-n$ mismatches were counted by comparing characters to the new $\$$ symbols. Therefore, the following computation satisfies the $k$-convolution (for prefixes of $T$):
    \[
    \kconv(S,\,T)[i+n-1]=\begin{cases}
        k+1 & j_i\le n-i+1 \\
        n+k+2-j_i-i & \text{otherwise}
    \end{cases}
    \]

    The computation of the first $n$ values of $\kconv$ is symmetrical and achieved by padding $T$ instead of $S$. The original algorithm does not report the position $j_i$ if there are fewer than $k+1$ mismatches at that alignment. However, by padding $S$ and $T$ by additional $k+1$ distinct sentinels, we are guaranteed that every alignment will incur at least $k+1$ mismatches.
\end{proof}

Although the $k$-convolution can speed up individual convolutions, we show that it is never optimal when integrated into the precise algorithm, as for every parameter range for $k$, either Abrahamson's method or the kangaroo jump technique outperform it.

The precise algorithm runtime is $\Oh{n\sqrt{f_S(n)}}$. Therefore, for $k= \Oh{\sqrt{f_S(n)}}$, the LCE method outperforms it. Since $k$ is polynomial in $n$, we have $f_S(n)= \Oh{nk^{2/3}\log^{4/3}{n}}$, and the LCE algorithm outperforms it for:
\[
k\le \sqrt{nk^{2/3}\log^{4/3}{n}}\to k^{2/3}\le n^{1/2}\log^{2/3}{n}\to k\le n^{3/4}\log n.
\]

However, for $k = \omega(n^{3/4}\log n)$, the worst-case running time of the $k$-convolution is:
\[
\Theta(nk^{2/3}\log^{4/3}{n})=\omega( n\sqrt{ n\log n}),
\]
hence it is outperformed by Abrahamson's method.

%% file: future_work.tex
Our current approach leaves room for improvement in terms of running time. Specifically, we leave open the problem of designing an algorithm running in time $\Oh{n^{1.5-\alpha}}$ for some $\alpha>0$ that finds all maximal palindromes in a string with wildcards. Moreover, an algorithm for finding the longest palindrome in the presence of wildcards and mismatches is yet to be described.